\Crefname{algocf}{Algorithm}{Algorithms}
\crefname{algocfline}{line}{lines}
\newcommand{\NN}{\mathbb N}
\newcommand{\ZZ}{\mathbb Z}
\newcommand{\abs}[1]{\left\lvert #1 \right\rvert}
\newcommand{\SC}{\mathcal C}
\newcommand{\tst}{t^\star}
\newcommand{\dst}{d^\star}
\newcommand{\nst}{n^\star}
\newcommand{\eat}[1]{}
\newcommand{\seq}[3]{#1^{(#2)}_{#3}}
\newcommand{\seqdef}[2]{\seq{#1}{#2}{}}
\begin{document}

    \title{\Large Language Generation in the Limit: Noise, Loss, and Feedback}
    \author{Yannan Bai\thanks{Department of Computer Science, Duke University
        (\email{yannan.bai@duke.edu}, \email{debmalya@cs.duke.edu}, \email{ian.zhang@duke.edu}).}
    \and Debmalya Panigrahi\footnotemark[1]
    \and Ian Zhang\footnotemark[1]
    }

    \date{}
    \maketitle



    \pagenumbering{arabic}
    \setcounter{page}{1}

    \begin{abstract}
        Kleinberg and Mullainathan (2024) recently proposed a formal framework
        called {\em language generation in the limit} and showed that given a sequence of example strings from an
        unknown target language drawn from any countable collection,
        an algorithm can correctly generate unseen strings from the target language within finite time.
        This notion of language generation was further refined by Li, Raman, and Tewari (2025),
        who defined progressively stricter categories called non-uniform and uniform generation
        within generation in the limit.
        They showed that a finite union of uniformly generatable collections is generatable in the limit,
        and asked if the same is true for non-uniform generation and generation in the limit.

        Our starting point in this paper is to resolve the question of Li, Raman, and Tewari in the negative:
        we give a uniformly generatable collection and a non-uniformly generatable collection
        whose union is not generatable in the limit.
        We then use facets of this construction to further our understanding
        of several variants of language generation.
        The first two, language generation with noise and without samples,
        were introduced by Raman and Raman (2025) and Li, Raman, and Tewari (2025) respectively.
        We show the equivalence of these models, for both uniform and non-uniform generation.
        We also provide a complete characterization of non-uniform noisy generation,
        complementing the corresponding result of Raman and Raman (2025) for uniform noisy generation.
        The former paper asked if there is any separation between noisy and non-noisy generation in the limit---we
        show that such a separation exists even with a {\em single} noisy string.
        Finally, we study the framework of generation with feedback, introduced by Charikar and Pabbaraju (2025),
        where the algorithm is strengthened by allowing it to ask membership queries.
        We draw a sharp distinction between finite and infinite queries:
        we show that the former gives no extra power, but the latter is closed under
        countable union, making it a strictly more powerful model than language generation without feedback.

        In summary, the results in this paper resolve the union-closedness of language generation in the limit,
        and leverage those techniques (and others) to give precise characterizations for natural variants
        that incorporate noise, loss, and feedback in language generation.
    \end{abstract}


    \section{Introduction.}\label{sec:intro}Buoyed by the tremendous popularity of large language models (LLMs), there has been a surge of recent interest in
formally understanding the phenomena of language learning and generation.
Suppose an algorithm is given an infinite sequence of distinct strings from an unknown language $K\in \SC$,
where $\SC$ is a collection of languages defined on a universe of strings $U$.
In this setting, the central question of {\em language generation in the limit} -- formulated recently by Kleinberg and
Mullainathan~\cite{KM24} -- is whether the algorithm can learn to correctly generate strings from $K$
within finite time.
The related problem of identifying the correct language $K$, rather than simply generating correctly from it,
was previously studied by several authors (Gold~\cite{Gol67}, Angluin~\cite{Ang80,Ang80a}),
but the results are largely negative: for almost any infinite collection $\SC$,
these results showed that the language identification problem is intractable.
It therefore came as a surprise when~\cite{KM24} showed the language {\em generation} problem was, in fact, tractable
    {\em for every countable collection of languages}.

This surprising positive result has led to a flurry of activity in the language generation problem.
Li, Raman, and Tewari~\cite{LRT25} classified collections of languages based on the finite time $\tst$
after which the algorithm correctly generates strings from the target language $K$.
If this time $\tst$ is independent of the target language $K$ and its enumeration,
then they called it {\em uniform generation};
if $\tst$ depends on $K$ but not its enumeration, they called it {\em non-uniform generation};
otherwise, the collection is simply said to be generatable in the limit.
They also derived structural characterizations for uniform
and non-uniform generation using the lens of classical learning theory.
Charikar and Pabbaraju~\cite{CP25} further studied the phenomenon of non-uniform generation in the limit,
and characterized general conditions and restrictions for this paradigm.
There have also been efforts at understanding variants of the basic model
that either strengthen or weaken the algorithm.
For instance, \cite{CP25} defined language generation with {\em feedback},
where the algorithm is strengthened by allowing it to ask membership queries of the adversary.
In contrast, Raman and Raman~\cite{RR25} defined a {\em noisy}
setting that yields a weaker algorithm, where the adversary is allowed to output
some strings that do not belong to the target language.
Another central theme has been understanding the fundamental tradeoff between {\em breadth} and {\em validity}
of language generation---the fraction of correct strings that the algorithm omits and wrong strings that it outputs.
Kalavasis, Mehrotra, and Velegkas~\cite{KMV25}
explored this tradeoff as captured by the complementary phenomena of mode collapse and hallucination
(see also Kalai and Vempala~\cite{KV24}).
This tension was also studied by Kleinberg and Wei~\cite{KW25} (see also \cite{KMV25a, CP25, PRR25}),
who established new definitions for the density of language generation
that help quantify breadth in language generation.

\subsection{Our Results.}

Inspired by these impressive developments in a short time frame, we consider a set of fundamental questions around
the power and limitations of language generation in this paper.
Our starting point is a question posed by~\cite{LRT25}.
They showed that any finite union of uniformly generatable collections is generatable in the limit,
and asked if the same is true for non-uniform generation and generation in the limit.
For the last category, this would imply closedness under finite union;
it is already known that the category is not closed under infinite (even countable) union.
Our first result refutes this in the strongest sense by defining just two collections
that are respectively non-uniformly and uniformly generatable (even without samples, i.e., without example strings from
the adversary), but their union is not generatable in the limit.

\begin{restatable}{theorem}{union}
    \label{thm:union_ungen}
    There exist collections $\SC_1$ and $\SC_2$ such that $\SC_1$ is non-uniformly generatable (without samples)
    and $\SC_2$ is uniformly generatable (without samples),
    but $\SC_1 \cup \SC_2$ is not generatable in the limit.
\end{restatable}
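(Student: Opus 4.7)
The plan is to give an explicit construction and verify the three required properties. I would take the universe $U = \NN \times \NN$, with the distinguished row $R_0 = \{0\} \times \NN$. The uniformly generatable collection is $\SC_2 = \{K : K \supseteq R_0\}$, witnessed (without samples) by the sequence enumerating $R_0$ with uniform threshold $\tst = 1$. For the non-uniformly generatable collection I would use a family of ``function-graph plus backbone'' languages: pick an infinite backbone set $B \subseteq U$ disjoint from $R_0$ (e.g.\ the column $B = \{(i,1) : i \geq 1\}$), and for each function $f : \NN \to \NN$ and finite $D \subseteq B$ set
\[
K_{f, D} \;=\; \{(i, f(i)) : i \geq 1\} \;\cup\; (B \setminus D),
\]
and let $\SC_1 = \{K_{f,D}\}$. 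The fixed enumeration of $B$ lies in every $K_{f,D}$ except on the finite set $D$, giving non-uniform generation without samples; no uniform sequence works because for every point a choice of $(f, D)$ excludes it, so $\bigcap \SC_1 = \emptyset$. By construction $\SC_1$ and $\SC_2$ are disjoint: each $K_{f,D}$ avoids column-$0$ while each $K \in \SC_2$ contains all of it.

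To prove $\SC_1 \cup \SC_2$ is not generatable in the limit, I would run an adversarial argument exploiting the $\SC_1 / \SC_2$ dichotomy. Given any candidate algorithm $A$, I would consider two strategies. Strategy~B never enumerates column-$0$, diagonalizes against $A$'s row-$\geq 1$ outputs by setting $f(i) \neq j$ the first time $A$ outputs a fresh $(i, j)$ with $i \geq 1$ and $j$ not on the backbone, and absorbs a finite budget of $A$'s backbone outputs into $D$; the resulting $K = K_{f,D}$ lies in $\SC_1$. Strategy~A enumerates $R_0$ in full together with a row-$\geq 1$ selection $Y$ explicitly chosen to avoid every row-$\geq 1$ output of $A$; the resulting $K = R_0 \cup Y$ lies in $\SC_2$. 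Since $A$ produces infinitely many outputs, either it outputs infinitely many column-$0$ strings (Strategy~B's $K$ contains no column-$0$, so they are all wrong) or infinitely many row-$\geq 1$ strings (Strategy~A excludes them from $Y$). The adversary picks the matching strategy, and in either case $A$ is wrong infinitely often.

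The main obstacle will be formalizing the commitment online, since $A$'s outputs may differ between the enumerations produced by the two strategies. The cleanest route I foresee is a delayed-commitment adversary that starts by pursuing Strategy~B (row-$\geq 1$ only, leaving both scenarios viable because Strategy~B's past enumeration can be extended to a Strategy~A-compatible one by simply adding $R_0$ later), and commits to Strategy~A only upon observing that $A$'s row-$\geq 1$ output count has surpassed a chosen threshold fraction of the step count. The delicate part is choosing the threshold so that in \emph{either} asymptotic regime of $A$ the adversary still produces infinitely many misses while keeping the final $K$ in $\SC_1 \cup \SC_2$; this may require tuning the backbone $B$ to block algorithms that try to exploit the non-uniform generator for $\SC_1$ directly, for instance by outputting the fixed enumeration of $B$ — which is defeated by adversarially placing those outputs into $D$ before commitment.
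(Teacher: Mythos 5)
Your choice of $\SC_1$ and $\SC_2$ cannot work: the union you construct \emph{is} generatable in the limit, so the negative half of the argument has no hope of being completed, no matter how the commitment is scheduled. Concretely, consider the algorithm that, at every step, checks whether any element of $R_0$ has yet appeared in the enumeration. If not, it outputs the smallest unseen backbone element $(i,1)$ with $(i,1)\notin S_t\cup\{z_0,\dots,z_{t-1}\}$; once some element of $R_0$ appears, it permanently switches to outputting the smallest unseen $(0,j)$. If $K\in\SC_2$, then $R_0\subseteq K$ is fully enumerated, so the switch happens at some finite time, after which every output lies in $R_0\subseteq K$. If $K=K_{f,D}\in\SC_1$, then $K\cap R_0=\emptyset$, so the switch never happens, the outputs are distinct backbone elements with strictly increasing first coordinate, and since $D$ is finite all but finitely many of them lie in $B\setminus D\subseteq K$. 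Either way the algorithm is correct from some finite time onward.

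The failure is structural, not a matter of formalizing the delayed commitment: you built $\SC_1$ and $\SC_2$ to be \emph{disjoint and separated by a positive witness} (appearance of an $R_0$-element), which hands the generator a perfect indicator for which side it is on. Your proposed patch of absorbing the algorithm's backbone outputs into $D$ has only a finite budget, while the algorithm above produces infinitely many backbone outputs before ever seeing $R_0$, so it cannot be defeated this way. The paper's construction deliberately avoids this trap: both $\SC_1=\bigcup_i\{A\cup P_i: A\subseteq\ZZ\}$ and $\SC_2=\{A\cup\ZZ_{<0}: A\subseteq\ZZ\}$ allow \emph{arbitrary} extra strings, so every finite prefix the adversary reveals remains simultaneously consistent with some language in $\SC_1$ and extendable toward a language in $\SC_2$ (the two collections even intersect). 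That ambiguity is exactly what lets the staged adversary interleave ``look like $\SC_1$, force a high output, drop a fresh negative, repeat'' forever, with the limiting $K$ landing in $\SC_2$ while the algorithm errs infinitely often. To repair your approach you would need to redesign $\SC_1$ so that its languages can contain arbitrary finite sets of $R_0$-elements (and similarly make $\SC_2$ flexible), at which point you are essentially reconstructing the paper's example.
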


The above theorem is also interesting from a conceptual perspective since it further distinguishes the phenomenon of
language generation from traditional models of learning, a line of research started by~\cite{LRT25}.
Indeed, for traditional learning, multiple learners can be amalgamated into a combined learner, such as in boosting,
but the above theorem rules this out for language generators in general.

\medskip\noindent{\bf Lossy and Noisy Generation.}
Next, we consider two natural variants of the language generation model
that are weaker than the basic model of~\cite{KM24}.
A noisy model of language generation was introduced by~\cite{RR25} where the adversary is constrained to output all
correct strings from $K$ but can also output a finite number of incorrect strings that do not belong to $K$.
In a similar vein,~\cite{LRT25}
introduced language generation without samples (they called it auto-regressive generation),
where the adversary does not provide any example strings,
and the algorithm must generate entirely by itself.
Our next result establishes an equivalence between these two models, for both uniform and non-uniform generation.
(Note that in language generation without samples, non-uniform generation
is equivalent to generation in the limit since there is no enumeration.)

\begin{restatable}{theorem}{equiuniform}
    \label{thm:equi_lossy_noisy}
    A collection $\SC$ is uniformly noisily generatable if and only if
    $\SC$ is uniformly generatable without samples.
    Similarly, a collection $\SC$ is non-uniformly noisily generatable if and only if
    $\SC$ is generatable in the limit without samples (equivalently, non-uniformly generatable without samples).
\end{restatable}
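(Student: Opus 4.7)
The plan is to reduce each equivalence to a single reusable argument, with only the direction from noisy generation to no-samples generation requiring any work. The easy direction is immediate in both the uniform and non-uniform cases: a no-samples generator is just a fixed sequence $(w_t)$ with $w_t \in K$ for all $t \geq t^\star(K)$ (or uniformly, all $t \geq t^\star$), and turning it into a noisy generator only requires ignoring the input stream and outputting $w_t$ at step $t$. The finite noise inserted by the adversary has no effect, so the same time bound carries over.

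For the forward direction (noisy $\Rightarrow$ no-samples) in the non-uniform case, I would fix any reference language $K_0 \in \SC$ together with a noise-free enumeration $\sigma = (x_1, x_2, \ldots)$ of $K_0$, let $\mathcal{A}$ be a non-uniform noisy generator with bound $t^\star(\cdot)$, and define the candidate no-samples sequence by $w_t := \mathcal{A}(x_1, \ldots, x_t)$. The central claim is that $w_t \in K$ for every $K \in \SC$ and every $t \geq t^\star(K)$, which immediately makes $(w_t)$ a non-uniform no-samples generator with the same bound; since a sample-free algorithm has no enumeration to depend on, this automatically coincides with generation in the limit without samples, giving the stated parenthetical equivalence. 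To prove the claim, I would fix $K \in \SC$ and $m \geq t^\star(K)$ and consider the hybrid stream $\tau_m$ whose first $m$ terms are $x_1, \ldots, x_m$ and whose remainder is an enumeration of $K \setminus \{x_1, \ldots, x_m\}$. The prefix contributes at most $m$ strings not in $K$ (those $x_i \in K_0 \setminus K$), and the suffix contributes the rest of $K$, so $\tau_m$ contains all of $K$ with only finitely many incorrect strings, and is a valid noisy enumeration of $K$. By the correctness of $\mathcal{A}$ at time $m \geq t^\star(K)$ on $\tau_m$, its output lies in $K$; since that output depends only on the first $m$ strings, which are exactly $x_1, \ldots, x_m$, the output equals $w_m$, and hence $w_m \in K$.

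The uniform case follows as a direct specialization: with $t^\star(K) \equiv t^\star$ independent of $K$, the same sequence $(w_t)$ satisfies $w_t \in \bigcap_{K \in \SC} K$ for every $t \geq t^\star$, so outputting any such $w_{t^\star}$ at every step is a uniform no-samples generator; the reverse direction simply places a common element as the constant output in the noisy setting. The only delicate bookkeeping in the whole argument is the definition of the hybrid stream $\tau_m$ when $K$ is finite and $K \setminus \{x_1, \ldots, x_m\}$ is exhausted, handled by padding with repetitions or by whatever finite-language convention the ambient model already uses. I expect no genuine obstacle beyond this: the prefix-mix construction is the single substantive idea driving both equivalences.
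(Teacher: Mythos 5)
Your core idea for the direction ``noisy $\Rightarrow$ no-samples'' matches the paper's almost exactly: feed the noisy generator a fixed reference enumeration, define $w_t$ as its output at time $t$, and argue via a hybrid stream (a prefix from the reference followed by the rest of $K$) that is a legitimate noisy enumeration of $K$. The paper does the same thing with the canonical enumeration $0,1,2,\dots$ of $\NN$; your choice of an arbitrary $K_0 \in \SC$ is cosmetically different but equivalent. However, both directions of your argument have the same concrete gap: you ignore the \emph{distinctness} constraints built into the two models, and as a result neither direction as written actually produces a valid generator.

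In the reverse direction, you claim that a noisy generator can ``ignore the input stream and output $w_t$ at step $t$,'' and for the uniform case you even suggest a constant output. But the noisy-generation definition requires the output $z_t$ to lie in $K \setminus S_t$, i.e., to be a string the adversary has \emph{not yet enumerated}. If the adversary chooses to enumerate the strings $w_{t^\star}, w_{t^\star+1}, \dots$ early (or to insert them as noise), your proposed generator's output at time $t$ will already be in $S_t$ and hence invalid; a constant output fails outright once that string appears in the enumeration. The paper's \Cref{alg:sim_noisy} repairs this with a one-line bookkeeping device: maintain a pointer $i$ and at each step advance to the smallest $j \ge i$ with $w_j \notin S_t$. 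Since the $w_j$ are distinct (by injectivity of the no-samples generator) and $S_t$ is finite, such a $j$ always exists, and once $i \ge t^\star$ the output is a correct unseen string.

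In the forward direction, you assert that the sequence $(w_t)$ ``immediately makes'' a no-samples generator, but the paper's definition requires a generator without samples to be an \emph{injection} $\NN \to U$: all outputs must be distinct. Nothing in the hybrid-stream argument forces the $w_t$ to be pairwise distinct — a noisy generator is free to repeat itself, and for $t < t^\star(K)$ it may output arbitrary garbage that collides with later values. The paper's \Cref{alg:sim_without} handles this with the same skip-forward mechanism, together with a (genuinely needed) argument that the skip set is nonempty: working over $\NN$, once $t$ exceeds $\max(S)$ and $t^\star$, correctness of the noisy generator forces $w_t \in K \setminus \{0,\dots,t\}$, so $w_t > \max(S)$ and is fresh. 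Your proposal does not supply this nonemptiness argument either. The ``finite $K$'' worry you raise at the end is a non-issue since the paper's languages are always infinite; the genuine issue is the missing distinctness bookkeeping in both directions.
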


We also complement the existing characterization for uniform noisy generation
in~\cite{RR25}\footnote{The result of~\cite{RR25} shows that a collection $\SC$ is uniformly generatable without samples
if and only if $\abs{\bigcap_{L \in \SC} L} = \infty$.}
by providing a complete characterization for non-uniform noisy generation.
By \cref{thm:equi_lossy_noisy},
this also means we now have complete characterizations for uniform and non-uniform generation without samples.

\begin{restatable}{theorem}{charnonuniform}
    \label{thm:gen_limit_without}
    A collection $\SC$ is generatable in the limit without samples
    if and only if there exists a countable sequence of collections $\SC_0 \subseteq \SC_1 \subseteq \dots$
    such that $\SC = \bigcup_{i \in \NN} \SC_i$ and $\abs{\bigcap_{L \in \SC_i} L} = \infty$ for all $i \in \NN$.
\end{restatable}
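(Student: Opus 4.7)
The plan is to prove both directions independently. For necessity, I would extract the filtration from any successful generator by stratifying languages by their convergence time; for sufficiency, I would combine the uniform generators for each stratum (which exist by the Raman--Raman characterization cited in the footnote) via a slow diagonalization. Throughout I use that, since an algorithm in the ``without samples'' model receives no input, I may assume it is deterministic and identify it with a fixed infinite sequence of distinct strings $x_1, x_2, \dots$ over $U$.

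\textbf{Necessity.} Suppose $\SC$ is generatable in the limit without samples, witnessed by an output sequence $x_1, x_2, \dots$. For each $K \in \SC$, there is some finite $\tst(K)$ such that $x_t \in K$ for all $t \geq \tst(K)$. Define
\[
    \SC_i \defeq \{K \in \SC : \tst(K) \leq i\}.
\]
The sequence is nested by construction, and $\SC = \bigcup_{i \in \NN} \SC_i$ because every $\tst(K)$ is finite. Moreover, for every $t \geq i$ and every $L \in \SC_i$, we have $x_t \in L$, so $x_t \in \bigcap_{L \in \SC_i} L$; the set $\{x_t : t \geq i\}$ is an infinite set of distinct strings, so $\abs{\bigcap_{L \in \SC_i} L} = \infty$.

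\textbf{Sufficiency.} Given a nested filtration $\SC = \bigcup_i \SC_i$ with each $\abs{\bigcap_{L \in \SC_i} L} = \infty$, I would construct a generator as follows. Let $g : \NN \to \NN$ be a nondecreasing unbounded function (e.g., $g(t) = \lfloor \log t \rfloor$). At step $t$, output the lexicographically smallest string in $\bigcap_{L \in \SC_{g(t)}} L$ that has not yet been output. This is always possible because the intersection is infinite while only finitely many strings have been output so far. For any target $K \in \SC$, choose $i^*$ with $K \in \SC_{i^*}$; once $g(t) \geq i^*$, nestedness gives $K \in \SC_{g(t)}$, hence $x_t \in \bigcap_{L \in \SC_{g(t)}} L \subseteq K$. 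Thus all but finitely many outputs lie in $K$.

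\textbf{Main obstacle.} The primary subtlety is in the sufficiency direction: one must ensure the dovetailed algorithm produces an infinite sequence of distinct strings even as the intersections shrink along the filtration and may be consumed by earlier choices. The ``maintain a seen-so-far set and skip repeats'' fix works precisely because each $\bigcap_{L \in \SC_i} L$ is infinite, so after any finite number of outputs a fresh element is always available. Secondary points to verify are the convention $\bigcap_{L \in \emptyset} L = U$ for empty strata (harmless when $U$ is infinite, and the problem is degenerate otherwise) and confirming that the formal ``without samples'' model requires infinitely many distinct outputs---without this requirement, the necessity direction could be defeated by a generator that outputs only finitely many distinct strings.
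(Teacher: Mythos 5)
Your proposal is correct and follows essentially the same strategy as the paper: stratify languages by the time at which the (necessarily injective) output sequence becomes correct to build the chain $\SC_i$, and for sufficiency output from the running intersection $\bigcap_{L \in \SC_{g(t)}} L$ while skipping previously emitted strings. The only difference is cosmetic---you prove $\bigl|\bigcap_{L \in \SC_i} L\bigr| = \infty$ directly from the tail $\{x_t : t \ge i\}$, while the paper invokes its characterization of uniform generation without samples (\Cref{thm:gen_unif_without}), and you allow a general nondecreasing unbounded schedule $g$ where the paper takes $g(t) = t$.
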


Our next contribution is to provide new definitions that offer quantitative refinements for both settings.
The first is a new concept of {\em lossy generation}
where the adversary can omit some (but possibly not all) strings from the target language $K$.
Clearly, the extreme case where the algorithm does not receive any input from the adversary at all is the previously
studied case of language generation without samples.
In the intermediate case, we allow the adversary to omit an infinite number of strings (while still enumerating an infinite subset of the target language).
In this model, we show that uniform and non-uniform generation is equivalent to uniform and non-uniform generation in the non-lossy setting.
\begin{theorem}
    If an algorithm $G$ uniformly generates for a collection $\SC$,
    then $G$ also uniformly generates for $\SC$ with infinite omissions.
    Similarly, if an algorithm $G$ non-uniformly generates for a collection $\SC$,
    then $G$ also non-uniformly generates for $\SC$ with infinite omissions.
\end{theorem}

We may also ask what happens if the adversary omits only a finite number of strings from the target language $K$.
Intuitively, this seems similar to the non-lossy setting since the target language is an infinite set.
Quite surprisingly, we show a strong separation between lossy and non-lossy generation---we
show that there are generatable collections that become ungeneratable
even if the algorithm omits just {\em one} string.

We take this fine-grained lens to noisy generation as well, and explore the impact of the level of noise
(quantified by the number of incorrect strings output by the adversary) on language generation.
\cite{RR25} asked if all generatable collections are also generatable with noise.
We answer this question in the negative in the strongest sense:
we show that there are generatable collections where even a {\em single}
incorrect string output by the adversary makes the collection ungeneratable.

Indeed, the same collection of languages gives us these two separations, which we state in the following theorem:
\begin{restatable}{theorem}{finegrained}
    \label{thm:fine_grained}
    For every $i \in \NN$, there exists a collection that is generatable in the limit with $i$
    omissions or with noise level $i$, but is not generatable in the limit with either $i+1$
    omissions or with noise level $i+1$.
\end{restatable}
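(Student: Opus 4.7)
The approach is to exhibit, for each $i \in \NN$, a single collection $\SC_i$ that simultaneously witnesses both separations. The construction equips every language $L \in \SC_i$ with a set of exactly $i+1$ ``signature'' strings, arranged so that the identity of $L$ can be recovered from any \emph{one} of them, while the remaining (non-signature) strings of $L$ carry essentially no information about which language in the collection $L$ actually is. Concretely, one takes the universe $U$ to be partitioned into a grid of blocks; each language is determined by a choice of block together with an $(i+1)$-tuple of designated elements of that block serving as the language's signature.

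\textbf{Positive direction.} With only $i$ omissions permitted, at least one of the $i+1$ signatures of the true target $K$ must eventually appear in the enumeration. The algorithm, upon seeing any candidate signature, narrows the set of possible targets to a countable sub-collection and then runs the \cite{KM24} generator on that sub-collection to generate correctly in the limit. An essentially dual argument handles $i$ noise strings: among any set of $i+1$ apparent signatures that the algorithm has observed, at least one must be genuine, so a consistency check (combined with the fact that the non-signature strings of $K$ are eventually exhausted by the enumeration) lets the algorithm identify $K$ in the limit.

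\textbf{Negative direction.} For $i+1$ omissions, the adversary picks a target $K$ and omits precisely its $i+1$ signature strings, so that the algorithm only observes strings consistent with an entire family of alternative targets. A diagonal argument, modelled on the construction behind \Cref{thm:union_ungen}, then forces the algorithm to eventually output a string outside the true $K$: whenever the algorithm commits to a string $s$, the adversary keeps its options open by committing (only in the limit) to a decoy language in the family that does not contain $s$. Dually, for noise level $i+1$ the adversary enumerates $K$ honestly but uses its $i+1$ noise slots to plant the signatures of a decoy $K'$; since $i+1$ signatures determine $K'$ uniquely, the algorithm is driven to commit to $K'$ and eventually produces a string in $K' \setminus K$.

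\textbf{Main obstacle.} The delicate point is to calibrate the signatures so that both the omission adversary and the noise adversary become effective at exactly budget $i+1$, while the algorithm retains success at budget $i$ in both models using the same collection. This requires the signatures to be simultaneously (i) indispensable, in that hiding all of them breaks identifiability, and (ii) decisive, in that planting a full set of them on a wrong candidate is irresistibly convincing. Pinning down this exact threshold---rather than merely reusing the non-generatability gadget of \Cref{thm:union_ungen}---is the crux of the argument, and is where quantitative bookkeeping over the $(i+1)$-sized signature sets must be done carefully.
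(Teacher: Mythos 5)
Your proposal takes a genuinely different route from the paper, but it is incomplete in a way you yourself flag: you leave the ``main obstacle''---actually pinning down a construction where the threshold falls exactly at $i$ vs.\ $i+1$---as unresolved, and that is not a detail but the whole content of the theorem.

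On the difference in approach: you try to give each language its own $(i+1)$-element signature that identifies that language, with the universe split into blocks, and you explicitly try to avoid ``merely reusing the non-generatability gadget of \Cref{thm:union_ungen}.'' The paper does the opposite, and more cheaply. It takes exactly the $\SC = \SC_1 \cup \SC_2$ gadget from \Cref{thm:union_ungen} and attaches a \emph{single shared} indicator set $\{0,\dots,i\}$: the languages derived from $\SC_1$ all contain $\{0,\dots,i\}$, while those derived from $\SC_2$ contain none of $\{0,\dots,i\}$. The indicator is a one-bit switch (``am I in $\SC_1^i$ or $\SC_2^i$?''), not a per-language identifier. With $\le i$ omissions, at least one of $\{0,\dots,i\}$ survives whenever the target lies in $\SC_1^i$; with $\le i$ noise strings, not all of $\{0,\dots,i\}$ can be planted when the target lies in $\SC_2^i$. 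Either way the algorithm learns which side of the switch it is on and generates trivially (output large positives, or output negatives). At budget $i+1$ the adversary can completely overwrite the indicator, and the problem collapses back to the $\SC$ of \Cref{thm:union_ungen}, which is already known to be ungeneratable; for noise the paper even formalizes this collapse via a projection lemma and an isomorphism. The gadget you wanted to avoid is precisely what makes the calibration automatic.

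There are also concrete problems with the parts of your sketch that are written down. First, if each signature element uniquely determines its language, the signatures across all languages are pairwise disjoint, which over a countable universe $U$ forces $\SC$ to be countable---and it is then not clear that an adaptive diagonal argument against an infinite sub-collection of ``alternative targets'' is even available, whereas the paper's $\SC^i$ is uncountable by design. Second, your noise negative direction has the adversary ``enumerate $K$ honestly'' while planting the $i+1$ signatures of a single decoy $K'$; but then the algorithm also sees all of $K$'s genuine signatures, so it is not ``driven to commit to $K'$,'' and a non-adaptive adversary cannot make the algorithm err infinitely often---you need the iterative stage-by-stage construction that \Cref{thm:union_ungen} uses. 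Finally, for the positive noise direction you appeal to a vague ``consistency check'' and speak of ``identifying $K$,'' but identification is strictly harder than generation and is not what is required. In short: the decomposition you propose is not wrong in spirit, but the load-bearing step is missing, and the places where you do commit to details contain errors that the paper's simpler construction avoids entirely.
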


This shows that, in particular for $i=0$, there is a collection that is generatable in the limit,
but becomes ungeneratable if a single string is omitted or if a single incorrect string is output by the adversary.
Furthermore, we show that knowledge of the level of noise is crucial to the language generation process.
In particular, we show a separation between the original model of generation in the limit with noise
due to~\cite{RR25} where the level of noise is finite but unknown to the algorithm,
and generation in the limit with noise level $i$, for every $i$.

\begin{restatable}{theorem}{noisesensitive}
    \label{thm:noise_sensitivity}
    There exists a collection that is generatable in the limit with noise level $i$ for any $i \in \NN$,
    but is not noisily generatable in the limit.
\end{restatable}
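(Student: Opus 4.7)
The plan is to construct a collection $\SC$ witnessing the separation by combining per-level hard collections from \Cref{thm:fine_grained} onto disjoint sub-universes. Set $U = \NN \times \NN$ with sub-universes $U_i = \{i\} \times \NN$, and for each $i \in \NN$ let $\SC^{(i)}$ be a copy of the collection from \Cref{thm:fine_grained} embedded in $U_i$, so that $\SC^{(i)}$ is generatable in the limit with noise level $i$ but not with noise level $i+1$. Define $\SC := \bigcup_{i \in \NN} \SC^{(i)}$.

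The hardness direction (not noisily generatable in the limit) follows from a containment argument. If any algorithm $A$ handled $\SC$ at all finite noise, then its restriction to targets in $\SC^{(i)}$ would handle those targets with noise up to $i+1$. But an algorithm that handles unknown finite noise in particular handles any specific known bound, yielding an algorithm for $\SC^{(i)}$ at noise level $i+1$, contradicting \Cref{thm:fine_grained}.

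For the easy direction (generatability in the limit with noise level $j$ for every $j$), I would exhibit an algorithm $A_j$ operating in two phases. Phase 1: $A_j$ identifies the target's sub-universe $U_i$ by tracking which sub-universe has accumulated the most strings over time; this succeeds in the limit because the target is infinite while noise is bounded by $j$, so eventually the target's sub-universe dominates. Phase 2: within the identified $U_i$, $A_j$ applies a level-appropriate strategy. When $i \geq j$, the native $\SC^{(i)}$ algorithm (tolerant to noise $\leq i$) handles noise $j \leq i$; when $i < j$, the native algorithm no longer suffices, and a modification is required.

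The main technical obstacle is handling the case $i < j$ in Phase 2. I would resolve this by augmenting each language $L \in \SC^{(i)}$ with additional structure---for instance, a sparse sequence of per-language ``beacons'' placed in a reserved portion of $U_i$---designed so that $A_j$ can identify the specific target via a voting strategy among beacons even at noise $j > i$, while the hardness of $\SC^{(i)}$ at noise $i+1$ is preserved on a separate portion of $U_i$. The delicate point is ensuring that this augmentation does not trivialize the hardness used in \Cref{thm:fine_grained} (for instance, a naive common safe sub-language would make $\SC^{(i)}$ generatable at any noise, destroying the separation); instead, the added structure must expose enough per-language information for the bounded-noise algorithm without providing a single safe output usable against the Theorem~\ref{thm:fine_grained} adversary. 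This may require a refined version of the proof of \Cref{thm:fine_grained} that leaves room for the beacons while maintaining the separation at level $i+1$.
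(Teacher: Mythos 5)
The proposal has a genuine gap in the easy direction that is not fixable by the sketched ``beacon'' augmentation; the underlying construction is the wrong one.

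With $\SC = \bigcup_{i\in\NN}\SC^{(i)}$ over disjoint sub-universes $U_i$, the easy direction (generatability at every fixed noise level $j$) fails already for $j=1$: the adversary simply picks a target $K\in\SC^{(0)}\subseteq U_0$ and plays the Theorem~\ref{thm:fine_grained} adversary for $\SC^{(0)}$ at noise level $1$, with all noise drawn from $U_0$. Any algorithm $A_1$ for $\SC$ must eventually emit strings in $K\subseteq U_0$, and the enumeration it sees is literally an $\SC^{(0)}$ noise-level-$1$ enumeration, so $A_1$ restricted to this sub-universe would be an $\SC^{(0)}$ algorithm at noise level $1$ --- contradicting the choice of $\SC^{(0)}$. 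The point you label as ``the main technical obstacle'' ($i<j$) is not a corner case to be patched; it is the whole difficulty, and it already kills the construction for $j\geq 1$. The beacon idea cannot repair this: you need $\SC^{(i)}$ (after augmentation) to stay hard at noise level $i+1$ for your hardness direction, yet be generatable at every level $j\geq i+1$ for your easy direction, and these are directly contradictory for the same sub-collection on the same sub-universe (as you partially note but do not resolve). No local modification of $\SC^{(i)}$ within $U_i$ can satisfy both.

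The paper avoids this tension by \emph{not} taking a direct sum of per-level-hard collections. It instead uses a single collection $\SC=\SC_1\cup\SC_2$ with $\SC_1=\{P_i\mid i\in\NN\}$ and $\SC_2=\{A\cup\ZZ_{<0}\mid A\subseteq\NN\}$, where the distinguishing signal is the \emph{number of negative integers seen}. An algorithm that knows the bound $j$ waits until $j{+}1$ distinct negatives appear before committing to $\SC_2$; since $\SC_1$-languages contain no negatives, $j{+}1$ of them cannot all be noise, and this threshold test is sound. Conversely, an adversary facing an algorithm with no noise bound can keep feeding positives interspersed with new negatives, staying consistent with ever-larger $P$-type targets, and build a limit target in $\SC_2$ on which the algorithm errs infinitely often. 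This is why the statement is really about \emph{knowledge of the noise budget}, not about stitching together levels of increasing hardness. Your hardness direction (reduction from the union to each $\SC^{(i)}$) is sound as a proof technique, but it is aimed at a construction whose easy direction cannot be made to work.
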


\medskip\noindent{\bf Generation and Identification with Feedback.}
We also consider the model introduced by~\cite{CP25} -- called language generation with {\em feedback}
-- where the algorithm can ask membership queries, i.e., whether a particular string is in the target language.
We precisely derive the role of feedback in language generation: we show that any collection that is generatable with
finite feedback is also generatable without feedback; in contrast, there are collections that are generatable with
infinite feedback but not without feedback.

\begin{restatable}{theorem}{feedbackcombined}
    \label{thm:feedback_combined}
    Language generation with infinite feedback is strictly more powerful than with finite feedback, the latter being
    equivalent to generation without feedback.
\end{restatable}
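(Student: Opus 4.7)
The theorem combines two non-trivial claims: (i) finite feedback is equivalent to no feedback, and (ii) infinite feedback is strictly stronger. The inclusions no feedback $\subseteq$ finite feedback $\subseteq$ infinite feedback are immediate (a stronger feedback model subsumes any weaker one), so the substantive content of (i) is that any collection generatable with finite feedback is also generatable without feedback. My plan is to prove (ii) first using \Cref{thm:union_ungen}, and then to prove the hard direction of (i) by a simulation-and-dovetailing argument.

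For (ii), take $\SC := \SC_1 \cup \SC_2$ from \Cref{thm:union_ungen}; by that theorem, $\SC$ is not generatable in the limit without feedback. With infinite feedback, however, $\SC$ is trivially generatable: enumerate the countable universe $U = \{u_1, u_2, \ldots\}$, and at each time step the algorithm queries successive $u_i$'s (from where it left off) until it finds one in $K$ that is distinct from all previously seen samples and previously output strings, which it then outputs. Since $K$ is infinite, only finitely many $u_i \in K$ are excluded by the current samples and outputs, so such a $u_i$ always exists; every output produced after the first is valid, yielding a generator in the limit.

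For the hard direction of (i), let $A$ be a finite-feedback generator for $\SC$. For each $\sigma \in \{0,1\}^*$, define the no-feedback algorithm $A_\sigma$ that simulates $A$, answering its $i$-th query with $\sigma_i$ and aborting if $A$ requests more than $|\sigma|$ queries. Because $A$ makes only finitely many queries on any run, each $K \in \SC$ is generated by $A_{\sigma_K}$, where $\sigma_K$ is the oracle's true answer sequence on that run; hence $\SC$ is covered by the countable family $\{A_\sigma : \sigma \in \{0,1\}^*\}$ of no-feedback algorithms, each handling some subset of $\SC$. I would construct the target generator $A'$ by a dovetailing scheme that maintains an active simulation index in a fixed shortlex enumeration of $\sigma$'s and updates this index over time. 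The main obstacle is defining the update rule: samples positively confirm membership in $K$ but cannot directly refute outputs lying outside $K$, so the rule must rely on subtler signals—for instance, tracking which simulations produce outputs consistent with one another and with the growing sample set, and leveraging the fact that $A_{\sigma_K}$ must succeed on \emph{every} adversarial enumeration while other $\sigma$'s typically succeed only on restricted adversary behavior. Establishing that the active index stabilizes at a $\sigma$ yielding a valid generator of $K$ is the technical heart of the proof, and I expect it to require a more intricate combination scheme than a plain minimum-index rule, possibly exploiting the precise finite-feedback structure of $A$'s query tree rather than treating the $A_\sigma$'s as opaque black boxes.
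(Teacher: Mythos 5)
Your decomposition of the theorem (separation via \Cref{thm:union_ungen}, plus a simulation showing finite feedback gives no extra power) matches the paper's, but both halves of your argument have genuine gaps.

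For the separation, the claim that infinite feedback makes \emph{every} collection ``trivially generatable'' by scanning a fixed enumeration of $U$ does not hold in this model. The generator makes exactly \emph{one} query per time step (it emits a single $y_t$ between receiving $x_t$ and producing $z_t$), so it cannot ``query successive $u_i$'s \ldots until it finds one in $K$.'' And even with one query per step the universal scan fails: if the adversary enumerates $K$ in the same order as the scan (so $x_t$ is the $(t{+}1)$-st element of $K$ among $u_1, u_2, \ldots$), then the set of strings the algorithm has ever confirmed to lie in $K$ is always contained in $S_t$, leaving no confirmed unseen string to output. Indeed, not every collection is generatable with feedback under this one-query-per-step model, which is precisely why the paper instead proves a structural sufficient condition (\Cref{thm:feedback_union}: any countable union of uniformly generatable collections is generatable with feedback) and observes that $\SC_1 \cup \SC_2$ from \Cref{thm:union_ungen} is such a union. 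You need something of this flavor in place of the universal scan.

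For the reduction of finite feedback to no feedback, your $A_\sigma$ decomposition is a sensible start, but you (correctly) leave the central combination rule open, and as stated the cover has a subtle flaw: the ``true answer sequence'' depends on the enumeration as well as on $K$, so $\sigma_K$ is not well-defined per language. The paper's \Cref{thm:feedback_finite} takes a different and complete route (\Cref{alg:query}): at every time $t$ it re-simulates the $i$-query generator $G$ on $x_{0:t}$ from scratch, answering each internal query $y$ with ``Yes'' iff $y \in S_t$. The crux is a preorder argument on the depth-$i$ binary query tree: because $S_t$ only grows, the simulated answer vector can only move forward in preorder, hence stabilizes after finitely many changes, and a stabilized vector must agree with the true oracle for $K$ (a persistent ``No'' for a string of $K$ would eventually be contradicted once that string is enumerated). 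After stabilization the simulation coincides with $G$ run against the real oracle, so correctness transfers. This exploits the query-tree structure of $A$ exactly as you suspected would be necessary, but sidesteps the need for any cross-simulation selection rule over the $A_\sigma$'s.
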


Finally, we consider the problem of language identification in the feedback model. In this problem, the algorithm must not only generate correct strings from the target language, but also correctly identify the index of the target language in a countable collection of languages. 
We show that any countable collection can be non-uniformly identified with infinite feedback.
\begin{restatable}{theorem}{identfeed}
    Any countable collection $\SC$ can be non-uniformly identified with feedback.
\end{restatable}

We give a figurative representation of the relationship between the different models of language generation that we consider in this paper in \Cref{fig:venn}.

\begin{figure}
    \centering
    \begin{tikzpicture}[thick,scale=0.8, every node/.style={scale=0.8}, font=\small]


\def\Rbig{5.2}      
\def\Rleft{3.00}    
\def\Rright{4.00}   
\def\Rinner{1.55}   
\def\Rnoisy{6.3}
\def\Rnoisya{7.4}
\def\Rlim{8.5}
\def\Rfeed{9.6}

\coordinate (O) at (0,0);
\coordinate (L) at (-1.75,-0); 
\coordinate (R) at (1,-0); 
\coordinate (I) at (-1,-0); 
\coordinate (N) at (0,0); 
\coordinate (Na) at (0,0); 
\coordinate (Lim) at (0,0); 
\coordinate (Feed) at (0,0);

\fill[gray!60] (O) circle (\Rfeed);

\begin{scope}[transparency group]
  \fill[white,opacity=.72] (L) circle (\Rleft);   
  \fill[white,opacity=.60] (R) circle (\Rright);  
  \fill[white,opacity=.85] (I) circle (\Rinner);  
  \fill[white,opacity=.50] (O) circle (\Rbig);
  \fill[white,opacity=.45] (N) circle (\Rnoisy);
  \fill[white,opacity=.35] (Na) circle (\Rnoisya);
  \fill[white,opacity=.25] (Lim) circle (\Rlim);
\end{scope}

\draw[gray!25] (L) circle (\Rleft);
\draw[gray!25] (R) circle (\Rright);
\draw[gray!25] (I) circle (\Rinner);
\draw[gray!25] (O) circle (\Rbig);
\draw[gray!25] (N) circle (\Rnoisy);
\draw[gray!25] (Na) circle (\Rnoisya);
\draw[gray!25] (Lim) circle (\Rlim);
\draw[gray!25] (Feed) circle (\Rfeed);

\node[align=center] at (0,4.5) {\bfseries Non-uniform};
\node[align=center] at (-3.85,-0) {\bfseries Uniform};
\node[align=center] at (3.12,-0) {\bfseries Non-unif. noisy = \\ \bfseries Non-unif. w/o samples};
\node[align=center] at (-1,-0) {\bfseries Unif. noisy = \\ \bfseries Unif. w/o samples};
\node[align=center] at (0,5.65) {\bfseries Noisy gen. in the limit};
\node[align=center] at (0,6.75) {\bfseries Gen. with noise level $i$};
\node[align=center] at (0,7.85) {\bfseries Gen. in the limit = \\
\bfseries Gen. in the limit with finite feedback};
\node[align=center] at (0,8.9) {\bfseries Gen. in the limit with infinite feedback};
\end{tikzpicture}

\caption{A figurative representation of the relationship between the various models of language generation that we consider in this paper. The equivalence of (non)-uniform noisy and (non)-uniform generation without samples is given by~\cref{thm:equi_lossy_noisy}.
The separations between noisy generation in the limit, genereration with noise level $i$, generation in the limit, and generation in the limit with infinite feedback are given by~\cref{thm:noise_sensitivity,thm:fine_grained,thm:feedback_combined}.}
\label{fig:venn}
\end{figure}

%
%
%
%
%
%

\medskip\noindent{\bf Concurrent and Subsequent Work.}
In concurrent and independent work, Hanneke, Karbasi, Mehrotra, and Velegkas~\cite{HKMV25}
also proved \Cref{thm:union_ungen}, thereby resolving the open question posed by \cite{LRT25}.
In work that appeared subsequent to our paper, Charikar and Pabbaraju~\cite{CP25b} explored 
an appropriately defined Pareto frontier for non-uniform generation.

\medskip\noindent{\bf Roadmap.}
We start with some preliminary definitions in \Cref{sec:prelim}.
The result on union-closedness (\Cref{thm:union_ungen}) appears in \Cref{sec:basic}.
The results on lossy and noisy generation appear in \Cref{sec:lossy,sec:noisy} respectively.
The results on generation with feedback appear in \Cref{sec:feedback}.

    \section{Preliminaries.}\label{sec:prelim}We follow the framework introduced by~\cite{KM24}, along with additional variations and definitions given by~
\cite{LRT25, RR25, CP25}.
A language $L$ is an infinite subset of a countably infinite set $U$ called the universe, and a collection $\SC$
is a (possibly uncountable) set of languages.
Unless specified otherwise, we will assume without loss of generality
that all collections are over the set of integers $\ZZ$.
We also denote the set of nonnegative integers by $\NN = \{0, 1, \dots\}$
and abbreviate contiguous elements of a sequence $x_i$, \dots, $x_j$ by $x_{i:j}$.

\subsection{Generation in the Limit.}
In the general setup, there is a fixed collection $\SC$ and a target language $K \in \SC$
selected by the adversary.
The adversary then presents the strings of $K$ in an enumeration $x_0$, $x_1$, \dots,
where each $x_t$ is contained in $K$, and for every $z \in K$, there exists some $t$ where $z = x_t$.
In addition, we require that every string in the enumeration is unique.
In previous literature, the adversary has been allowed to repeat strings in its enumeration.
However, we show in~\cref{sec:appendix} that generation is equivalent regardless of whether
the adversary is allowed to repeat strings in its enumeration.
Thus, we require every string in the enumeration to be unique, which simplifies some of the proofs and definitions.

At each time step $t$, the algorithm takes as input the set of strings enumerated by the adversary so far
and outputs a new string $z_t$. 
The goal is that after some finite time $\tst$, all strings
$z_t$ for $t \ge \tst$ are correct \emph{unseen} strings from the target language $K$.
Note that unlike the adversary, the algorithm is allowed to output the same string multiple times,
but the algorithm's string is only considered correct if it is distinct from all the example strings
given by the adversary so far.


For any enumeration $x$, we will use $S(x)_t = \{x_0, x_1, \dots, x_t\}$ to denote
the set of strings enumerated up until time $t$.
When the enumeration is clear from context, we will simply write $S_t$.
We also use $S_{\infty} = \bigcup_{i \in \NN} \{x_i\}$ to denote the entire set of enumerated strings.

\begin{definition}[Generator algorithm~\cite{LRT25}]
    A generator algorithm is a function $U^* \to U$ which takes as input a finite ordered
    set of strings $x_0$, \dots, $x_t$, and outputs a string $z_t$.
\end{definition}

\begin{definition}[Generation in the limit~\cite{KM24}]
    An algorithm $G$ generates in the limit for a collection $\SC$
    if for any $K \in \SC$ and any enumeration $x$ of $K$,
    there exists a time $\tst$ such that for all $t \ge \tst$,
    the generated string $z_t$ at time $t$ is in $K \setminus S_t$.
\end{definition}

Note that in the above definition, the time $\tst$ at which the algorithm must generate correctly
can be a function of both the target language $K$ and the adversary's enumeration $x$ of $K$.
A stricter requirement would be that $\tst$ is independent of the
enumeration or even the target language.
These notions were formalized by~\cite{LRT25} to define non-uniform and uniform generation.

\begin{definition}[Non-uniform generation~\cite{LRT25}]
    An algorithm $G$ non-uniformly generates for a collection $\SC$
    if for any $K \in \SC$, there exists a time step $\tst$
    such that for every enumeration $x$ of $K$ and every time $t \ge \tst$,
    the generated string $z_t$ is in $K \setminus S_t$.
\end{definition}

\begin{definition}[Uniform generation~\cite{LRT25}]
    An algorithm $G$ uniformly generates for a collection $\SC$
    if there exists a time step $\tst$ such that for any $K \in \SC$ and
    any enumeration $x$ of $K$,
    the generated string $z_t$ for every time $t \ge \tst$
    is in $K \setminus S_t$.
\end{definition}

The following are some useful properties and combinatorial dimensions of a collection of languages.
\begin{definition}[Consistent languages]
    Given a collection $\SC$, the set of consistent languages for a set $S$ is the set
    $\{L \in \SC \mid S \subseteq L\}$ of languages in $\SC$ that contain $S$.
\end{definition}

\begin{definition}[Closure~\cite{LRT25}]
    Given a collection $\SC$, the closure of a set $S$ is the intersection
    of all consistent languages for $S$ in $\SC$.
\end{definition}

\begin{definition}[Closure dimension~\cite{LRT25}]
    The closure dimension of a collection $\SC$ is the size of the largest set
    $S = \{x_1, \dots, x_d\}$ such that the closure of $S$ in $\SC$ is finite.
\end{definition}

\subsection{Noisy Generation.}
Raman and Raman~\cite{RR25} introduced a model of noisy generation where there may be
a finite number of extraneous strings in the adversary's enumeration.
As before, the adversary selects a language $K \in \SC$
and an enumeration $y_0$, $y_1$, \dots\ of $K$.
However, the adversary is now allowed to insert $\nst$ \emph{unique} strings not belonging to $K$ for any finite
\emph{noise level} $\nst \in \NN$ into the enumeration $y_0$, $y_1$, \dots\ to
obtain a noisy enumeration $x_0$, $x_1$, \dots.
The noisy enumeration is then presented to the algorithm $G$, which must eventually generate
correct unseen strings from the target language $K$.

\begin{definition}[Noisy enumeration]
    For any infinite language $K$, a noisy enumeration of $K$
    is any infinite sequence $x_0$, $x_1$, \dots\ without repetitions,
    such that $K \subseteq \bigcup_{i \in \NN} \{x_i\}$ and
    $\abs{\bigcup_{i \in \NN} \{x_i\} \setminus K} < \infty$.
\end{definition}

Similar to generation without noise, we have the corresponding definitions for generation with noise based on
how the time step $\tst$ at which we generate correctly is quantified.

\begin{definition}[Uniform noisy generation~\cite{RR25}]
    An algorithm $G$ uniformly noisily generates for a collection $\SC$
    if there exists a time $\tst$ such that for every $K \in \SC$ and every noisy enumeration $x$ of $K$,
    the algorithm's output $z_t$ is in $K \setminus S_t$ for all times $t \ge \tst$.
\end{definition}

\begin{definition}[Non-uniform noisy generation~\cite{RR25}]
    An algorithm $G$ non-uniformly noisily generates for a collection $\SC$
    if for every $K \in \SC$, there exists a time $\tst$ such that for every noisy enumeration $x$ of $K$,
    the algorithm's output $z_t$ is in $K \setminus S_t$ for all times $t \ge \tst$.
\end{definition}

\begin{definition}[Noisy generation in the limit~\cite{RR25}]
    An algorithm $G$ noisily generates in the limit for a collection $\SC$
    if for every $K \in \SC$ and every noisy enumeration $x$ of $K$, there exists a time $\tst$ such that
    the algorithm's output $z_t$ is in $K \setminus S_t$ for all times $t \ge \tst$.
\end{definition}

\cite{RR25} provides two additional variants of uniform and non-uniform noisy generation
where the time step $\tst$ can depend on the noise level, but we do not discuss those variations here.



\subsection{Projection.}
We define a new notion of projecting a collection onto a smaller universe,
which will be useful for reasoning about generation.

\begin{definition}[Projection of a language]
    Given any language $L$, the projection of $L$ onto a universe $U'$
    is defined as $L \cap U'$.
\end{definition}

\begin{definition}[Projection of a collection]
    Given any collection $\SC$ of languages, the projection of $\SC$ onto a universe $U'$
    is defined as $\{L \cap U' \mid L \in \SC, \abs{L \cap U'} = \infty\}$.
\end{definition}
We can imagine the projection of $\SC$ onto $U'$ as projecting each language $L \in \SC$
onto $U'$, and then removing the projections that have a finite number of elements.

    \section{Generation in the Limit is Not Closed under Finite Union.}\label{sec:basic}In their work defining uniform and non-uniform generation, Li, Raman, and Tewari~\cite{LRT25}
showed that the finite union of uniformly generatable collections is generatable in the limit.
They asked if the same holds for the broader classes, i.e.,
whether the finite union of non-uniformly generatable classes are generatable in the limit,
and whether generatability in the limit is closed under finite unions (Questions $6.2$, $6.3$).
In this section, we refute these possibilities in the strongest sense.
We give two collections $C_1$ and $C_2$ where $C_1$ is generatable in the limit
(even without samples, i.e., with no example strings provided by the adversary),
$C_2$ is uniformly generatable in the limit (without samples),
but $C_1 \cup C_2$ is not generatable in the limit.
Note that collections that are generatable in the limit without samples are also non-uniformly generatable.
Thus this negatively answers Questions $6.2$, $6.3$ in~\cite{LRT25}.

Recall that without loss of generality, the universe $U$ is the set of all integers $\ZZ$.
For each integer $i \in \ZZ$, define $P_i = \{i, i+1, i+2, \dots\}$
to be the infinite increasing sequence of integers starting at $i$.



\union*


\begin{proof}

    Let $\SC_1 = \bigcup_{i \in \NN} \{A \cup P_i \mid A \subseteq \ZZ\}$.
    A language in collection $\SC_1$ comprises an arbitrary subset of integers
    along with all integers starting from some value $i \in \NN$.
    The collection $\SC_1$ is generatable in the limit without samples
    since the algorithm can output $0$, $1$, $2$, \dots.
    The algorithm starts generating correctly from time $\tst = i$.

    Let $\SC_2 = \{A \cup \ZZ_{< 0} \mid A \subseteq \ZZ\}$.
    A language in collection $\SC_2$ comprises an arbitrary subset of integers along with all negative integers.
    The collection $\SC_2$ is uniformly generatable without samples since
    the algorithm can output $-1$, $-2$, $-3$, \dots.
    The algorithm only generates correct integers, i.e., starting from time $\tst = 0$.

    We now show that $\SC = \SC_1 \cup \SC_2$ is not generatable in the limit.
    Assume for contradiction that there exists an algorithm $G$ which generates in the limit for $\SC$.
    We will construct an enumeration $\{x_i\}_{i \in \NN}$ of a language $K \in \SC_2$ such that there
    exists an infinite sequence of times $t_0 < t_1 < \dots$ where the string
    output by $G$ at each time $t_i$ is not in $K$.
    This gives the desired contradiction.

    The adversary's enumeration proceeds in stages,
    where $L_j$ denotes the language that the adversary enumerates in stage $j$.
    This enumeration is denoted $\seqdef{x}{j}$,
    where $\seq{x}{j}{i}$ denotes the integer output by the algorithm from the language $L_j$ at time $i$.

    Initially, we are in stage $0$.
    In this stage, $L_0 = \NN$, and the adversary's enumeration is given by $\seq{x}{0}{i} = i$.
    Since $G$ generates in the limit, there must exist a time $t_0$
    where the algorithm outputs an integer $z_{t_0} > t_0$.
    For $i \le t_0 + 1$, we set
    \[x_i =
    \begin{cases}
        i & i \le t_0 \\
        -1 & i = t_0+1
    \end{cases}.\]
    In other words, the adversary follows the enumeration $\seqdef{x}{0}$
    till time $t_0$ and then generates $-1$ in the next timestep.

    We now enter stage $1$.
    Recall that $S_t$ denotes the set of enumerated strings $\{x_0, \dots, x_t\}$ until time $t$.
    Let $L_1 = S_{t_0+1} \cup \{z_{t_0}+2, z_{t_0}+3, \ldots\}$ be a language in $\SC_1$,
    and let the adversary's enumeration be given by
    \[\seq{x}{1}{i} =
    \begin{cases}
        \seq{x}{0}{i} & i \le t_0\\
        -1 & i = t_0 + 1\\
        z_{t_0} + (i-t_0) & i \ge  t_0 + 2
    \end{cases}.\]
    This definition ensures that the adversary's enumeration is valid,
    and that the algorithm's output at time $t_0$ is incorrect since $z_{t_0} \notin L_1$.
    Once again, since $G$ generates in the limit, there must exist a time $t_1 > t_0 + 1$
    where the algorithm outputs an integer $z_{t_1} > z_{t_0} + (t_1 - t_0)$.
    For $t_0 + 2 \le i \le t_1 + 1$, we set
    \[x_i =
    \begin{cases}
        z_{t_0} + (i-t_0) & t_0 + 2 \le i \le t_1 \\
        -2 & i = t_1+1
    \end{cases}.\]
    In other words, the adversary follows the enumeration $\seqdef{x}{1}$
    from time $t_0+2$ to time $t_1$ and then generates $-2$ in the next timestep.

    We proceed iteratively in this manner.
    Stage $j$ is entered at time $t_{j-1} + 2$.
    Let $L_j = S_{t_{j-1}+1} \cup \{z_{t_{j-1}}+2, z_{t_{j-1}}+3, \ldots\}$ be a language in $\SC_1$,
    and let the adversary's enumeration be given by
    \[\seq{x}{j}{i} =
    \begin{cases}
        \seq{x}{j-1}{i} & i \le t_{j-1}\\
        -j & i = t_{j-1} + 1\\
        z_{t_{j-1}} + (i-t_{j-1}) & i \ge  t_{j-1} + 2
    \end{cases}.\]
    As earlier, this definition ensures that the adversary's enumeration is always valid,
    and that the algorithm's output at time $t_{j'}$ for all $j' < j$ is incorrect
    since each $z_{t_{j'}}$ is not in $L_j$.
    Once again, since $G$ generates in the limit, there must exist a time $t_j > t_{j-1} + 1$
    where the algorithm outputs an integer $z_{t_j} > z_{t_{j-1}} + (t_j - t_{j-1})$.
    For $t_{j-1} + 2 \le i \le t_j + 1$, we set
    \[x_i =
    \begin{cases}
        z_{t_{j-1}} + (i-t_{j-1}) & t_{j-1} + 2 \le i \le t_j \\
        -(j + 1) & i = t_j + 1
    \end{cases}.\]
    In other words, the adversary follows the enumeration $\seqdef{x}{j}$ from time
    $t_{j-1}+2$ to time $t_j$ and then generates $-(j+1)$ in the next timestep.

    \Cref{fig:union} is an example of what the enumeration $x$ and outputs $z$ may look like at the end of stage $2$.

    To conclude, consider the language $K = \bigcup_{i \in \NN} \{x_i\}$, where the sequence $x$
    is constructed from the infinite iterative procedure described above.
    By construction, $\ZZ_{< 0} \subseteq K$, so $K \in \SC_2$.
    Furthermore, there exists an infinite sequence of times $t_0 < t_1 < \dots$
    where the algorithm's output $z_{t_j}$ at each such time is not contained in $K$.
    Thus $G$ does not generate $K$ in the limit.
\end{proof}

\begin{figure}
    \centering
    \begin{tikzpicture}[x=.93cm,y=0.6cm,>=Latex,
        every label/.style={font=\scriptsize,inner sep=2pt}]

        \draw[<->] (-3.5,0) -- (13.5,0) node[right] {$\mathbb{Z}$};
        \foreach \n in {-3,-2,...,13}
        \draw (\n,0) -- (\n,0.04) node[above=0pt,scale=0.65]{\(\n\)};

        \def\ay{0.75}
        \foreach \x/\idx/\lab in {
            0/0/{$x_0$},
            1/1/{$x_1$},
            -1/2/{$x_2$},
            6/3/{$x_3$},
            7/4/{$x_4$},
            8/5/{$x_5$},
            -2/6/{$x_6$},
            11/7/{$x_7$},
            12/8/{$x_8$},
            -3/9/{$x_9$}}
            {
            \node[fill=blue,circle,inner sep=1.25pt,label=above:\lab] at (\x,\ay) {};
        }


        \def\by{-0.6}
        \foreach \x/\idx/\lab in {
            -1/0/{$z_0$},
            4/1/{$z_1$},
            -2/2/{$z_2$},
            -3/3/{$z_3$},
            2/4/{$z_4$},
            9/5/{$z_5$},
            3/6/{$z_6$},
            10/7/{$z_7$},
            13/8/{$z_8$},
            5/9/{$z_9$}}
            {
            \ifnum
                \idx=1
                \node[fill=red,circle,inner sep=1.5pt,label=below:\lab] at (\x,\by) {};
            \else
                \ifnum
                    \idx=5
                    \node[fill=red,circle,inner sep=1.5pt,label=below:\lab] at (\x,\by) {};
                \else
                    \ifnum
                        \idx=8
                        \node[fill=red,circle,inner sep=1.5pt,label=below:\lab] at (\x,\by) {};
                    \else
                        \node[fill=green,circle,inner sep=1.5pt,label=below:\lab] at (\x,\by) {};
                    \fi
                \fi
            \fi
        }



    \end{tikzpicture}
    \caption{An example of possible values of $x$ and $z$ at the end of stage $2$
        where $t_0 = 1$, $t_1 = 5$, and $t_2 = 8$.
        The red dots represent the incorrect values output by the algorithm
        at times $z_{t_0}$, $z_{t_1}$, and $z_{t_2}$.}
    \label{fig:union}
\end{figure}

    \section{Lossy Generation.}\label{sec:lossy}In this section, we define models of generation where the adversary is allowed to omit some strings in its enumeration
of the target language $K$.

\subsection{Generation Without Samples.}

In the most extreme model, the adversary is allowed to omit the entire target language.
Equivalently, we can think of this model as requiring the algorithm to generate by itself
without receiving any information about the target language.

\begin{definition}[Generator algorithm without samples]
    A generator algorithm without samples is an \emph{injection} $G \colon \NN \to U$,
    where $G(t)$ represents the algorithm's output at time $t$ for every $t \in \NN$.
\end{definition}

Note that in the above definition, we require the function to be an injection.
That is, every output $z_t$ must be unique and cannot appear more than once in the algorithm's output.
This condition ensures that the algorithm cannot repeatedly output a single string,
and is similar to the restriction in generation with samples that the algorithm must
output a string that has not yet appeared in the enumeration.

As before, we get different refinements depending on how we quantify
the timestep at which the algorithm must generate correctly.

\begin{definition}[Uniform generation without samples]
    An algorithm uniformly generates without samples for a collection $\SC$
    if there exists a $\tst$ such that for any language $K \in \SC$ and all $t \ge \tst$,
    the string $z_t$ generated by the algorithm at time $t$ belongs to $K$.
    In addition, there cannot exist distinct times $t \neq t'$ such that $z_t = z_{t'}$.
\end{definition}

Note that in this setting, there is no enumeration of the target language,
so the definitions for non-uniform generation and generation in the limit coincide.

\begin{definition}[Generation in the limit without samples %
(Non‑uniform generation without samples)]
    An algorithm generates in the limit without samples for a collection $\SC$
    if for any language $K \in \SC$, there exists a $\tst$ such that for all $t \ge \tst$,
    the string $z_t$ generated by the algorithm at time $t$ belongs to $K$.
    In addition, there cannot exist distinct times $t \neq t'$ such that $z_t = z_{t'}$.
\end{definition}

We now show that these models are equivalent to the uniform/non-uniform noisy generation settings
(where the time is independent of the noise level) introduced in~\cite{RR25}.
This provides a simpler way of thinking about uniform and non-uniform noisy generation
using a model that does not involve an adversary.
To show this equivalence, we now prove each part of \Cref{thm:equi_lossy_noisy} separately.

\begin{algorithm2e}
    \SetKwInput{Input}{Input}
    \Input{An infinite sequence $z_0$, $z_1$, \dots}

    $i = 0$ \\
    \For{$t = 0, 1, 2, \dots$}{
        \label{line:noisy_for}
        Adversary reveals $x_t$ \\
        $i = \min \{j \ge i \mid z_j \not \in \{x_0, \dots, x_t\}\}$ \\
        \label{line:noisy_min}
        output $z_i$ \\
        $i = i + 1$
    }
    \caption{Noisy generator}
    \label{alg:sim_noisy}
\end{algorithm2e}


\begin{algorithm2e}
    \SetKwInput{Input}{Input}
    \Input{An infinite sequence $z_0$, $z_1$, \dots}

    $i = 0$ \\
    $S = \emptyset$ \\
    \For{$t = 0, 1, 2, \dots$}{
        \label{line:without_for}
        $i = \min \{j \ge i \mid z_j \notin S\}$\\
        \label{line:without_min}
        output $z_i$ \\
        $S = S \cup \{z_i\}$\\
        $i = i + 1$ \\
    }
    \caption{Generator without samples}
    \label{alg:sim_without}
\end{algorithm2e}

\begin{theorem}
    \label{thm:equi_unif_lossy_noisy}
    A collection $\SC$ is uniformly noisily generatable if and only if
    $\SC$ is uniformly generatable without samples.
\end{theorem}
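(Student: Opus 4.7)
The plan is to prove both directions via explicit algorithmic constructions.

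For the $(\Leftarrow)$ direction, let $A$ be a uniform generator without samples with outputs $z_0, z_1, \ldots$ (distinct by definition) and threshold $\tst$, so that $z_t \in K$ for every $K \in \SC$ whenever $t \ge \tst$. The plan is to feed this sequence into \cref{alg:sim_noisy} to obtain a noisy generator $G$. The analysis rests on the invariant that $i$ is incremented by at least $1$ per iteration (via the final $i = i+1$), so at the start of iteration $t$ we have $i \ge t$, and the same lower bound holds after the $\min$ update on \cref{line:noisy_min}. The $\min$ is well-defined because $S_t$ contains only $t+1$ elements while $z$ has infinitely many distinct values. For $t \ge \tst$ we thus have $i \ge \tst$, which forces $z_i \in K$ by $A$'s property, and $z_i \notin S_t$ by the choice of $i$. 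So $G$ uniformly noisily generates with threshold $\tst$.

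For the $(\Rightarrow)$ direction, let $G$ be a uniform noisy generator with threshold $\tst$. The plan is to construct a without-samples generator $A$ by feeding $G$ its own previous outputs. Fix $\tst+1$ distinct strings $\sigma_0, \sigma_1, \ldots, \sigma_\tst$ (e.g., $\sigma_j = -j-1$), and define $A$ recursively: $A(t) = \sigma_t$ for $t < \tst$, and $A(t) = G(\sigma_0, \sigma_1, \ldots, \sigma_\tst, A(\tst), \ldots, A(t-1))$ for $t \ge \tst$; equivalently, extending by $\sigma_j = A(j-1)$ for $j > \tst$, we have $A(t) = G(\sigma_{0:t})$. The crucial observation is that for \emph{every} $K \in \SC$, the finite distinct prefix $\sigma_{0:t}$ extends to a valid noisy enumeration of $K$: append any enumeration of the infinite set $K \setminus \{\sigma_0, \ldots, \sigma_t\}$, producing at most $t+1$ noise strings. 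Applying $G$'s uniform noisy guarantee, $A(t) = G(\sigma_{0:t})$ lies in $K \setminus \{\sigma_0, \ldots, \sigma_t\}$ for every $K \in \SC$ and every $t \ge \tst$; since $A(t)$ depends only on the prefix (not the extension), we conclude $A(t) \in \bigcap_{L \in \SC} L$. Moreover $A(t)$ lies outside $\{\sigma_0, \ldots, \sigma_t\}$, which contains all earlier outputs of $A$, so $A$'s outputs are pairwise distinct. Hence $A$ uniformly generates without samples with threshold $\tst$.

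The main obstacle lies in the $(\Rightarrow)$ direction: one has to recognize that because the noisy model allows an \emph{arbitrary} finite amount of noise, any finite distinct prefix can be completed to a valid noisy enumeration of any target language in $\SC$. This lets us extract from $G$ a single output that is guaranteed to lie in \emph{every} $K \in \SC$, effectively collapsing the noisy guarantee to a without-samples guarantee, while the self-referential construction of the prefix forces successive outputs of $A$ to be distinct.
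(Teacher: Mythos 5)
Your proof is correct, and the $(\Leftarrow)$ direction is essentially the paper's (both analyze \cref{alg:sim_noisy} in the same way). The $(\Rightarrow)$ direction, however, takes a genuinely different route. The paper assumes WLOG that $U=\NN$, feeds the \emph{fixed} canonical sequence $0,1,2,\dots$ to the noisy generator $G$ to obtain $z_t = G(0,\dots,t)$, and then post-processes with \cref{alg:sim_without} to skip duplicates (since nothing forces the raw $z_t$ to be distinct). You instead build a \emph{self-referential} input sequence, feeding $A$'s own past outputs (padded with $\tst+1$ seed strings) back into $G$; because $G$'s guarantee is to output an \emph{unseen} string, the distinctness of $A$'s outputs is then automatic and no deduplication step is needed. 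Both constructions hinge on exactly the same key observation that any finite distinct prefix extends to a valid noisy enumeration of every $K \in \SC$ (since the noise budget, while finite, is unconstrained), so $G$'s output on such a prefix must lie in $\bigcap_{L \in \SC} L$. The trade-off: your construction is more self-contained and eliminates an auxiliary algorithm, at the cost of a small inductive check that the self-referential $\sigma$-sequence stays injective; the paper's fixed-sequence-plus-filter is slightly more machinery but has the advantage that \cref{alg:sim_without} is reused verbatim in the non-uniform analog (\cref{thm:equi_limit_lossy_noisy}), whereas your self-referential trick would need to be re-verified there (it does still work, since the noisy generator only needs to eventually succeed on each fixed $K$).
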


\begin{proof}
    Let $G$ uniformly generate without samples for $\SC$
    and let $z_0$, $z_1$, \dots\ be the strings generated by $G$.
    We claim that \cref{alg:sim_noisy} on input $z_0$, $z_1$, \dots\ is
    a uniform noisy generator for $\SC$.
    Intuitively, \cref{alg:sim_noisy} simply outputs the strings $z_0$, $z_1$, \dots\ , while
    skipping over the strings already enumerated by the adversary.
    First note that the strings $z_i$ are distinct, so the set on line~\ref{line:noisy_min} is always nonempty.
    Since $G$ uniformly generates without samples, there is a time $\tst$ such that $z_t \in K$ for any $t \ge \tst$ and
    $K \in \SC$.
    Because $i$ is incremented each iteration, we have $i \ge t$ at the beginning
    of each iteration on line~\ref{line:noisy_for}.
    Thus if $t \ge \tst$, we have that $i \ge \tst$, implying that $z_i \in K$.
    Furthermore, line~\ref{line:noisy_min} ensures that $z_i \not \in \{x_0, \dots, x_t\}$.
    Thus for any $t \ge \tst$ and $K \in \SC$, the output $z_i$ is in $K \setminus \{x_0, \dots, x_t\}$ as desired.

    For the other direction, fix a uniformly noisily generating algorithm $G$,
    and assume without loss of generality that $\SC$ is a collection over $\NN$.
    Now for each $t$, define $z_t$ to be $G(0, \dots, t)$.
    We claim that \cref{alg:sim_without} on input $z_0$, $z_1$, \dots\ is
    a uniform generator for $\SC$ without samples.
    Intuitively, \cref{alg:sim_without} simply outputs the strings $z_0$, $z_1$, \dots\ , while
    skipping over any repeated strings.
    We first show that the set $\{j \ge i \mid z_j \notin S\}$ on line~\ref{line:without_min} is always nonempty.
    Note that the sequence $0$, \dots, $t$ is a valid beginning to a noisy enumeration of any $K \in \SC$.
    Thus there exists a time step $\tst$ at which $z_t \in K \setminus \{0, \dots, t\}$
    for any $K \in \SC$ and $t \ge \tst$.
    At any iteration of line~\ref{line:without_for}, let $m = \max(S)$.
    For all $t \ge \max(\tst, m)$, note that $z_t \in K \setminus \{0, \dots, t\}$
    and $S \subseteq \{0, \dots, t\}$.
    Thus $z_t \not \in S$, implying that line~\ref{line:without_min} is well-defined.
    Furthermore, we always have that $i \ge t$ during the execution of the algorithm,
    so when $t \ge \tst$, we have $z_i \in K$ for any $K \in \SC$.
    Finally, since $z_i \not \in S$ at each iteration, each of the outputs is unique as desired.
\end{proof}

A very similar proof shows the corresponding theorem for generation in the limit without samples.

\begin{theorem}
    \label{thm:equi_limit_lossy_noisy}
    A collection $\SC$ is non-uniformly noisily generatable if and only if
    $\SC$ is generatable in the limit without samples.
\end{theorem}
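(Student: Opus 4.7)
The plan is to follow the proof of \Cref{thm:equi_unif_lossy_noisy} almost verbatim, replacing the single uniform time $\tst$ by a language-dependent $\tst_K$ and verifying that both \Cref{alg:sim_noisy,alg:sim_without} still witness the two directions.

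For the forward direction, I would take an algorithm $G$ that generates in the limit without samples, let $z_0, z_1, \dots$ be its output sequence, and feed this sequence into \Cref{alg:sim_noisy}. Fix any $K \in \SC$ and let $\tst$ be the time promised by the without-samples guarantee for $K$. The invariant from the uniform proof is that the pointer $i$ satisfies $i \ge t$ at the start of iteration $t$, so once $t \ge \tst$ we have $z_i \in K$; line \ref{line:noisy_min} simultaneously rules out $z_i \in \{x_0, \dots, x_t\}$, giving $z_i \in K \setminus S_t$. Since $\tst$ depends only on $K$, this matches the quantifier order required by the non-uniform noisy definition, uniformly in the noisy enumeration chosen by the adversary.

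For the reverse direction, assume $G$ non-uniformly noisily generates and, without loss of generality, that $\SC$ lives on $\NN$. Define $z_t := G(0, 1, \dots, t)$ and run \Cref{alg:sim_without} on this sequence. For every $K \in \SC$, the prefix $0, 1, \dots, t$ contains at most $t+1$ strings outside $K$, hence extends to a valid noisy enumeration of $K$; the non-uniform noisy guarantee therefore supplies some $\tst$, depending only on $K$, with $z_t \in K \setminus \{0, \dots, t\}$ for all $t \ge \tst$. By the same induction as in the uniform proof, $i \ge t$ holds throughout \Cref{alg:sim_without}, so every output past iteration $\tst$ lies in $K$, and outputs are distinct by construction. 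For well-definedness of line \ref{line:without_min}, fixing any witness $K \in \SC$ once and for all gives $z_j > j \ge \max(S)$ for every $j \ge \max(\tst, \max(S)+1)$, so the candidate set is nonempty. The main thing to verify is that the $\tst_K$'s appear only in the analysis, never inside either algorithm's specification; since neither algorithm inspects $K$ this is automatic, and the empty-collection case is vacuous on both sides.
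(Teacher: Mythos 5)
Your proposal is correct and matches the paper's own proof essentially verbatim: both directions use the same simulating algorithms (\Cref{alg:sim_noisy} and \Cref{alg:sim_without}), the same pointer invariant $i \ge t$, the same observation that $0,\dots,t$ is a prefix of some noisy enumeration of every $K$, and the same $K$-dependent $\tst$ replacing the uniform one in \Cref{thm:equi_unif_lossy_noisy}.
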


\begin{proof}
    Let $G$ generate in the limit without samples for $\SC$
    and let $z_0$, $z_1$, \dots\ be the strings generated by $G$.
    As in the previous proof, we claim that \cref{alg:sim_noisy} on input $z_0$, $z_1$, \dots\
    is a non-uniform noisy generator for $\SC$.
    First note that the strings $z_i$ are distinct, so the set on line~\ref{line:noisy_min} is always nonempty.
    Now fix an arbitrary $K$ and let $\tst$ be the time at which $z_t \in K$ for any $t \ge \tst$.
    Since $i$ is incremented each iteration, we have $i \ge t$ at the beginning
    of each iteration on line~\ref{line:noisy_for}.
    Thus if $t \ge \tst$, we have that $i \ge \tst$, implying that $z_i \in K$.
    Furthermore, line~\ref{line:noisy_min} ensures that $z_i \not \in \{x_0, \dots, x_t\}$.
    Thus for any $t \ge \tst$ and $K \in \SC$, the output $z_i$ is in $K \setminus \{x_0, \dots, x_t\}$ as desired.

    For the other direction, fix a non-uniformly noisily generating algorithm $G$,
    and assume without loss of generality that $\SC$ is a collection over $\NN$.
    Now for each $t$, define $z_t$ to be $G(0, \dots, t)$.
    We claim that \cref{alg:sim_without} on input $z_0$, $z_1$, \dots\
    generates in the limit for $\SC$ without samples.
    We first show that the set $\{j \ge i \mid z_j \notin S\}$ on line~\ref{line:without_min} is always nonempty.
    Fix an arbitrary $K \in \SC$.
    Note that the sequence $0$, \dots, $t$ is a valid beginning to a noisy enumeration of $K$.
    Thus there exists a time step $\tst$ at which $z_t \in K \setminus \{0, \dots, t\}$ for all $t \ge \tst$.
    At any iteration of line~\ref{line:without_for}, let $m = \max(S)$.
    For all $t \ge \max(\tst, m)$, note that $z_t \in K \setminus \{0, \dots, t\}$
    and $S \subseteq \{0, \dots, t\}$.
    Thus $z_t \not \in S$, implying that line~\ref{line:without_min} is well-defined.
    Furthermore, we always have that $i \ge t$ during the execution of the algorithm,
    so when $t \ge \tst$, we have $z_i \in K$.
    Finally, since $z_i \not \in S$ at each iteration, each of the outputs is unique as desired.
\end{proof}

Combining Theorem~$3.1$ in~\cite{RR25} with \cref{thm:equi_unif_lossy_noisy},
we have the following characterization of uniform generation in the limit without samples.

\begin{theorem}
    \label{thm:gen_unif_without}
    A collection $\SC$ is uniformly generatable without samples
    if and only if $\abs{\bigcap_{L \in \SC} L} = \infty$.
\end{theorem}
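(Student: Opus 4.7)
The plan is to obtain this characterization by chaining two already-established facts, or, equivalently, to give a short direct argument that mirrors the same ideas. The two ingredients are Theorem 3.1 of Raman and Raman, which characterizes uniform noisy generation by $\bigl|\bigcap_{L \in \SC} L\bigr| = \infty$, and our own \Cref{thm:equi_unif_lossy_noisy}, which shows that uniform generation without samples is equivalent to uniform noisy generation. The result then drops out by transitivity, with no new combinatorial work required.

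For the direct approach, I would argue both directions separately. For the easier direction, suppose $\bigl|\bigcap_{L \in \SC} L\bigr| = \infty$. Then I can fix an enumeration $w_0, w_1, w_2, \dots$ of distinct elements of $\bigcap_{L \in \SC} L$ and define the sample-free generator by $z_t = w_t$. The outputs are distinct, and for every $K \in \SC$ and every $t \geq 0$ we have $z_t \in K$, so $\SC$ is uniformly generatable without samples with witness $\tst = 0$. Conversely, suppose $G$ uniformly generates $\SC$ without samples with witness time $\tst$, producing an injection $t \mapsto z_t$. By definition, for every $K \in \SC$ and every $t \geq \tst$, we have $z_t \in K$. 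Hence the set $\{z_t : t \geq \tst\}$ is an infinite (by injectivity) subset of every $L \in \SC$, and therefore an infinite subset of $\bigcap_{L \in \SC} L$, which gives the desired infinitude.

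There is essentially no main obstacle: the hard work has been done, either externally in Raman and Raman or earlier in this section in \Cref{thm:equi_unif_lossy_noisy}. The only thing to be careful about in the direct argument is the injectivity requirement in the definition of a sample-free generator, which is exactly what lets the final step in the converse produce an infinite subset of the intersection rather than merely a nonempty one. I would therefore present the proof as a one-paragraph corollary of the two cited results, and optionally remark that the direct argument above makes the characterization transparent without passing through the noisy model.
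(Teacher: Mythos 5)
Your proposal is correct and the first paragraph matches the paper exactly: the paper derives this theorem by combining Theorem~3.1 of Raman and Raman with \Cref{thm:equi_unif_lossy_noisy}, with no further argument given. The direct two-direction argument you supply is also correct (and the observation that injectivity of the sample-free generator is what upgrades nonemptiness to infinitude is exactly the right point to flag), but it is extra content beyond what the paper records.
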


A similar characterization of non-uniform noisy generation does not appear in the literature.
We bridge this gap by giving a full characterization of collections that are generatable in the limit without samples.
By \cref{thm:equi_limit_lossy_noisy}, this also provides a characterization of non-uniform noisy generation.

\begin{algorithm2e}
    \SetKwInput{Input}{Input}
    \Input{An infinite chain of collections $\SC_0 \subseteq \SC_1 \subseteq \dots$}

    $i = 0$ \\
    $S = \emptyset$ \\
    \For{$t = 0, 1, 2, \dots$}{
        \label{line:without_lim_for}
        $i = \min\{j \ge i \mid j \in \left(\bigcap_{L \in \SC_t} L\right) \setminus S\}$ \\
        \label{line:without_lim_min}
        output $z_t = i$ \\
        $S = S \cup \{i\}$\\
        $i = i + 1$ \\
    }
    \caption{Generator in the limit without samples}
    \label{alg:without_lim}
\end{algorithm2e}

\charnonuniform*


\begin{proof}
    For one direction,
    let $\SC$ be a collection and assume that there exists a countable sequence of collections
    $\SC_0 \subseteq \SC_1 \subseteq \dots$ such that $\SC = \bigcup_{i \in \NN} \SC_i$
    and $\abs{\bigcap_{L \in \SC_i} L} = \infty$ for all $i \in \NN$.
    Assume without loss of generality that $\SC$ is a collection over $\NN$.
    We claim that \Cref{alg:without_lim} on input $\SC_0$, $\SC_1$, \dots\
    generates in the limit without samples.
    During each iteration of the for loop on line~\ref{line:without_lim_for},
    we have by the hypothesis that $\abs{\bigcap_{L \in \SC_t} L} = \infty$.
    Since $S$ is finite, the set $\left(\bigcap_{L \in \SC_t} L\right) \setminus S$ must be infinite.
    Furthermore, since all but a finite number of elements of $\NN$ are greater than $i$,
    the set on line~\ref{line:without_lim_min} must be nonempty.

    Now consider an arbitrary $K \in \SC$.
    Since $\SC = \bigcup_{i \in \NN} \SC_i$, there must be some index $\tst$ such that $K \in \SC_{\tst}$.
    The collections form a chain, so $K$ must be in $\SC_t$ for all $t \ge \tst$.
    Thus for each iteration $t \ge \tst$, the output $z_t$ of \Cref{alg:without_lim}
    is in $\bigcap_{L \in \SC_t} L \subseteq K$ as desired.
    Furthermore, the output $z_t$ is not in $S$, so each output is distinct.
    (Note that if we were to simply choose an arbitrary element from $\SC_t$ at each iteration,
    the condition that each output must be distinct may be violated.)

    For the other direction, fix an algorithm $G$ which generates in the limit without samples.
    For each language $L \in \SC$, define $\tst(\SC, L)$ to be the time at which $G$
    generates correctly for the target language $L$.
    We now construct $\SC_i = \{L \in \SC \mid \tst(\SC, L) \le i \}$ for each $i \in \NN$.
    Clearly $\SC_0 \subseteq \SC_1 \subseteq \dots$.
    Furthermore, since each collection $\SC_i$ is uniformly generatable (in particular at time $i$),
    we have by \Cref{thm:gen_unif_without} that $\abs{\bigcap_{L \in \SC_i} L} = \infty$ for each $i$ as desired.
\end{proof}

For countable collections $\SC$, we provide a simpler version of \cref{thm:gen_limit_without}.

\begin{theorem}
    If $\SC$ is a countable collection, then $\SC$ is generatable in the limit without samples if and only if
    $\abs{\bigcap_{L \in \SC'} L} = \infty$ for every finite $\SC' \subseteq \SC$.
\end{theorem}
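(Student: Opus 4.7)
The plan is to derive both directions from \Cref{thm:gen_limit_without} as a black box, using the countability of $\SC$ to replace the abstract chain decomposition by the canonical chain of finite prefixes of an enumeration of $\SC$.

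For the forward direction, I would assume $\SC$ is generatable in the limit without samples and invoke \Cref{thm:gen_limit_without} to obtain a chain $\SC_0 \subseteq \SC_1 \subseteq \dots$ with $\SC = \bigcup_{i \in \NN} \SC_i$ and $\abs{\bigcap_{L \in \SC_i} L} = \infty$ for each $i$. Then for any finite $\SC' = \{L_1, \dots, L_k\} \subseteq \SC$, each $L_j$ lies in some $\SC_{n_j}$, so taking $n = \max_j n_j$ gives $\SC' \subseteq \SC_n$ because the sets form a chain. Hence $\bigcap_{L \in \SC'} L \supseteq \bigcap_{L \in \SC_n} L$, which is infinite. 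Note that this direction does not actually use countability of $\SC$.

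For the backward direction, I would use the countability of $\SC$ directly. Enumerate $\SC = \{L_0, L_1, L_2, \dots\}$ and define $\SC_i = \{L_0, L_1, \dots, L_i\}$. This is manifestly an ascending chain whose union is $\SC$, and each $\SC_i$ is finite, so the hypothesis gives $\abs{\bigcap_{L \in \SC_i} L} = \infty$ for every $i$. Applying the other direction of \Cref{thm:gen_limit_without} then yields that $\SC$ is generatable in the limit without samples.

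There is no real obstacle here; the argument is essentially a bookkeeping exercise on top of \Cref{thm:gen_limit_without}. The only conceptual point worth flagging is that countability is what lets us turn a "for every finite subcollection" hypothesis into a single ascending chain of finite witnesses covering $\SC$, which is exactly the shape required by \Cref{thm:gen_limit_without}; without countability, one cannot in general exhaust $\SC$ by a nested sequence of finite subcollections.
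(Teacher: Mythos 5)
Your proof is correct, and both directions go through. The backward direction (hypothesis $\Rightarrow$ generatable) is essentially identical to the paper's: enumerate $\SC$, take the chain of finite prefixes $\SC_i = \{L_0, \dots, L_i\}$, and apply \Cref{thm:gen_limit_without}. (The paper splits off a trivial case when $\SC$ is finite, since then there is no infinite enumeration $L_0, L_1, \dots$; you'd handle that by repeating the last language or simply observing that $\SC$ itself is a finite subcollection with infinite intersection, hence uniformly generatable by \Cref{thm:gen_unif_without}. Minor.)

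The forward direction (generatable $\Rightarrow$ hypothesis) is where you diverge from the paper. You invoke \Cref{thm:gen_limit_without} again, extract the chain $\SC_0 \subseteq \SC_1 \subseteq \dots$ with infinite intersections, and observe that any finite $\SC'$ sits inside some $\SC_n$. The paper instead argues directly: given a finite $\SC' \subseteq \SC$, the generator $G$ has a finite correctness time $t(L)$ for each $L \in \SC'$, so $\tst = \max_{L \in \SC'} t(L) < \infty$ makes $G$ a \emph{uniform} generator without samples for $\SC'$, and then \Cref{thm:gen_unif_without} gives $\abs{\bigcap_{L \in \SC'} L} = \infty$. Your route is cleaner as pure bookkeeping and treats \Cref{thm:gen_limit_without} symmetrically as a black box in both directions; the paper's route only needs the (easier) uniform characterization \Cref{thm:gen_unif_without} for this direction and is thus slightly more self-contained. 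Both prove the same thing, and your observation that this direction does not use countability is also correct and matches the paper.
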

\begin{proof}
    First let $\SC$ be a countable collection and assume that
    $\abs{\bigcap_{L \in \SC'} L} = \infty$ for every finite $\SC' \subseteq \SC$.
    If $\SC$ has a finite number of languages, then our hypothesis implies that
    $\abs{\bigcap_{L \in \SC} L} = \infty$, so in fact $\SC$ is uniformly generatable without samples.
    Otherwise, fix an arbitrary ordering of the languages in $\SC$ so that $\SC = \{L_0, L_1, \dots\}$.
    Now for each $i \in \NN$, let $\SC_i = \{L_0, \dots, L_i\}$.
    The collections form a chain $\SC_0 \subseteq \SC_1 \subseteq \dots$
    such that $\SC = \bigcup_{i \in \NN} \SC_i$ and $\abs{\bigcap_{L \in \SC_i} L} = \infty$ for all $i \in \NN$.
    Thus by \Cref{thm:gen_limit_without}, $\SC$ is generatable in the limit without samples.

    For the other direction, let $G$ be an algorithm that generates in the limit without samples,
    and let $\SC' \subseteq \SC$ be an arbitrary \emph{finite} subset of $\SC$.
    Since $G$ generates in the limit without samples, there is a time $t(L)$
    for each $L \in \SC'$ at which time $G$ must generate correctly for $L$.
    Thus at time $\tst = \max_{L \in \SC'} t(L)$, the algorithm $G$ must
    generate correctly for every language in $\SC'$.
    Since each $t(L)$ is finite and $\SC'$ is finite, the time $\tst$ must also be finite.
    Thus $G$ generates uniformly in the limit for $\SC'$.
    By \Cref{thm:gen_unif_without}, we have $\abs{\bigcap_{L \in \SC'} L} = \infty$ as desired.
\end{proof}
%

\subsection{Generation with Infinite Omissions.}
Another natural setting in lossy generation is to require the adversary to enumerate any infinite
subset of the target language.
In particular, this allows the adversary to omit an infinite number of strings
in the target language from the enumeration.

\begin{definition}[Enumeration with infinite omissions]
    For any infinite language $K$, an enumeration of $K$ with infinite omissions
    is any infinite sequence $x_0$, $x_1$, \dots\ without repetitions,
    such that $\bigcup_{i \in \NN} \{x_i\} \subseteq K$.
\end{definition}

We now define the notions of uniform and non-uniform generation with infinite omissions.

\begin{definition}[Uniform generation with infinite omissions]
    An algorithm $G$ uniformly generates with infinite omissions for a collection $\SC$
    if there exists a time step $\tst$ such that for any $K \in \SC$,
    and every enumeration $x$ of $K$ with infinite omission and every time $t \ge \tst$,
    the generated string $z_t$ at time $t$ is in $K \setminus S_t$.
\end{definition}

\begin{definition}[Non-uniform generation with infinite omissions]
    An algorithm $G$ non-uniformly generates with infinite omissions for a collection $\SC$
    if for any $K \in \SC$, there exists a time step $\tst$
    such that for every enumeration $x$ of $K$ with infinite omission and every time $t \ge \tst$,
    the generated string $z_t$ at time $t$ is in $K \setminus S_t$.
\end{definition}

We now provide two conceptual results not presented in the introduction.
We show that for both uniform and non-uniform generation,
allowing the adversary to omit an infinite number of strings does not
change which collections can be generated.
In fact, we show the stronger statement that \emph{any} algorithm which (non)-uniformly
generates for a collection will also (non)-uniformly generate with infinite omissions.

\begin{theorem}
    \label{thm:unif_inf}
    If an algorithm $G$ uniformly generates for a collection $\SC$,
    then $G$ also uniformly generates for $\SC$ with infinite omissions.
\end{theorem}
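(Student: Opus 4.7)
The plan is to reduce generation with infinite omissions to ordinary uniform generation by an \emph{enumeration extension} argument, exploiting the fact that $G$ is deterministic and depends only on its input prefix. Concretely, I would claim that the very same $\tst$ that witnesses uniform generation of $G$ for $\SC$ also witnesses uniform generation with infinite omissions.

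Fix any $K \in \SC$, any enumeration $x$ of $K$ with infinite omissions, and any time $t \ge \tst$. The main step is to exhibit a full enumeration $y$ of $K$ (in the sense of the original uniform generation definition) whose first $t+1$ entries agree with $x_{0:t}$. This is possible because $x_{0:t}$ is a finite sequence of distinct elements of $K$, and $K \setminus S_t$ is infinite (since $K$ is infinite and $S_t$ is finite); hence we can list the elements of $K \setminus S_t$ without repetitions as $y_{t+1}, y_{t+2}, \ldots$ and set $y_i = x_i$ for $i \le t$. The resulting sequence $y$ enumerates every element of $K$ exactly once.

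The conclusion then follows immediately. Since $G$ is a function of its input prefix, we have $G(y_{0:t}) = G(x_{0:t})$. Applying the uniform generation guarantee to the valid enumeration $y$ at time $t \ge \tst$, we get $G(y_{0:t}) \in K \setminus \{y_0, \dots, y_t\} = K \setminus S_t$, and therefore $G(x_{0:t}) \in K \setminus S_t$, which is exactly what the definition of uniform generation with infinite omissions demands.

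I do not expect a genuine obstacle here; the only thing that needs any care is verifying that the extended sequence $y$ is indeed a legitimate enumeration of $K$ (distinct entries, range equal to $K$), which is automatic from the infinitude of $K \setminus S_t$ and the no-repetition assumption on $x$. Essentially the same argument will yield the analogous statement for non-uniform generation with infinite omissions, since in that case one uses the per-language $\tst(K)$ in place of the uniform $\tst$.
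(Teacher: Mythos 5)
Your proof is correct and follows the same approach as the paper: both observe that any finite prefix of an infinite-omission enumeration can be extended to a full enumeration of $K$, and then invoke the uniform time bound $\tst$ on that extension. You simply spell out the extension step (using that $K \setminus S_t$ is infinite) that the paper states without elaboration.
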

\begin{proof}
    Let $\SC$ be any collection and let $G$ uniformly generate for $\SC$.
    There must exist a time $\tst$ such that for any $K \in \SC$ and enumeration $x$ of $K$,
    the output $z_t$ is an unseen string of $K$ for all $t \ge \tst$.
    Now consider an arbitrary language $K \in \SC$ and an arbitrary enumeration $x'$ of $K$
    with infinite omissions.
    Note that for any $t$, the prefix $x'_0$, \dots, $x'_{t}$
    is the beginning of some valid full enumeration of $K$.
    Thus for all $t \ge \tst$, the algorithm $G$ must output a valid unseen string of $K$
    when given enumeration $x'$.
\end{proof}

\begin{theorem}
    \label{thm:nonuni_inf}
    If an algorithm $G$ non-uniformly generates for a collection $\SC$,
    then $G$ also non-uniformly generates for $\SC$ with infinite omissions.
\end{theorem}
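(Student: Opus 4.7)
The plan is to mimic the uniform argument from \Cref{thm:unif_inf} almost verbatim, observing that non-uniform generation is still robust to ``which enumeration'' is presented, as long as some full enumeration is consistent with the prefix seen so far. Concretely, the only property of uniformity that the previous proof used was that for $t \ge \tst$, the algorithm $G$ outputs an unseen string of $K$ on every valid prefix of a full enumeration of $K$; and this property transfers to non-uniform generation once we fix the target language $K$.

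First, I would fix an arbitrary $K \in \SC$ and let $\tst = \tst(K)$ be the time guaranteed by non-uniform generation: for every full enumeration $x$ of $K$ and every $t \ge \tst$, $G(x_{0:t}) \in K \setminus S(x)_t$. Next, given any enumeration $x'$ of $K$ with infinite omissions and any $t \in \NN$, I would argue that the finite prefix $x'_{0:t}$ extends to a valid full enumeration of $K$. This follows from two observations: (i) $\{x'_0,\ldots,x'_t\} \subseteq K$ by definition of an enumeration with infinite omissions, and (ii) $K \setminus \{x'_0,\ldots,x'_t\}$ is infinite (since $K$ is infinite and we removed only finitely many strings), so we can pick any enumeration of this remainder and append it after $x'_{0:t}$ to obtain a valid full enumeration $\tilde x$ of $K$.

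Then, since $G$'s output depends only on the prefix seen so far, $G(x'_{0:t}) = G(\tilde x_{0:t})$, and by the non-uniform generation guarantee applied to $\tilde x$, for every $t \ge \tst$ we have $G(\tilde x_{0:t}) \in K \setminus S(\tilde x)_t = K \setminus S(x')_t$. This yields exactly the non-uniform-with-infinite-omissions guarantee for $K$, and since $K$ was arbitrary in $\SC$, the theorem follows.

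I do not expect any real obstacle here: the argument is essentially a prefix-extension trick identical to the uniform case, and the only mild care required is to verify that the extension to a full enumeration of $K$ is well-defined (which relies purely on $K$ being infinite and $x'_{0:t}$ being finite and without repetitions). No new combinatorial ingredient beyond the definitions and \Cref{thm:unif_inf}'s proof idea is needed.
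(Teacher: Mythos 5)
Your proof is correct and matches the paper's argument essentially step for step: fix $K$, obtain $\tst(K)$ from non-uniform generation, and observe that any finite prefix of an enumeration with infinite omissions is also a prefix of some full enumeration of $K$, so the guarantee at time $t \ge \tst$ carries over. The extra detail you supply—that $K \setminus \{x'_0,\ldots,x'_t\}$ is infinite so the extension is well-defined—is exactly the (implicit) justification the paper relies on.
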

\begin{proof}
    Let $\SC$ be any collection and let $G$ non-uniformly generate for $\SC$.
    Fix an arbitrary $K \in \SC$.
    There must exist a time $\tst$ such that for any enumeration $x$ of $K$,
    the output $z_t$ is an unseen string of $K$ for all $t \ge \tst$.
    Now consider an arbitrary enumeration $x'$ of $K$ with infinite omissions.
    Note that for any $t$, the prefix $x'_0$, \dots, $x'_{t}$
    is the beginning of some valid full enumeration of $K$.
    Thus for all $t \ge \tst$, the algorithm $G$ must output a valid unseen string of $K$
    when given enumeration $x'$.
\end{proof}

The above two results highlight an important difference between uniform/non-uniform generation
and generation in the limit (in the lossless setting).
Uniform and non-uniform generation require the algorithm to generate correctly after a finite time step,
independent of the enumeration, implying that the algorithm should be able to generate
correctly after only being shown a subset of the language.
In contrast, generation in the limit allows the algorithm to eventually take into account the entire target language.



\subsection{Generation with Finite Omissions.}
In the final setting in lossy generation, the adversary is only allowed to omit a finite number of strings. Since all
languages are infinite, the natural intuition is that finite omissions should not change whether a collection is
generatable.
Quite surprisingly, we show that this intuition is entirely wrong---there are generatable collections in the standard
(lossless) setting that become ungeneratable even if the adversary omits a single string!

We start with some definitions that quantify the number of strings omitted by the adversary.

\begin{definition}[Enumeration with $i$ omissions]
    For any infinite language $K$ and integer $i \in \NN$, an enumeration
    of $K$ with $i$ omissions is any infinite sequence $x_0$, $x_1$, \dots\ without repetition,
    such that $\bigcup_{i \in \NN} \{x_i\} \subseteq K$ and
    $\abs{K \setminus \bigcup_{i \in \NN} \{x_i\}} \le i$.
\end{definition}

\begin{definition}[Generation in the limit with $i$ omissions]
    For any $i \in \NN$, an algorithm $G$ generates in the limit with $i$ omissions
    if for every $K \in \SC$ and every enumeration $x$ of $K$ with $i$ omissions,
    there exists a time $\tst$ such that for all $t \ge \tst$,
    the string generated by the algorithm at time $t$ belongs to $K \setminus \{x_0, \dots x_t\}$.
\end{definition}

We could also define versions of uniform and non-uniform generation with finite omissions.
However, \Cref{thm:unif_inf,thm:nonuni_inf} already show that
(non)-uniform generation with infinite omissions is equivalent to (non)-uniform generation without loss.
Thus, we only consider generation in the limit in the setting with finite omissions.
We now state and prove the portion of \Cref{thm:fine_grained} concerning lossy generation.

\begin{restatable}{theorem}{omitone}
    \label{thm:omit_one}
    For every $i \in \NN$, there exists a collection that is generatable in the limit
    with $i$ omissions, but is not generatable in the limit with $i+1$ omissions.
    Therefore, for $i=0$ in particular,
    there is a collection that is generatable in the limit, but not so if a single string is omitted.
\end{restatable}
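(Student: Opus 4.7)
The plan is to construct, for each $i \in \NN$, a single collection $\SC = \SC_i$ over $\ZZ$ together with an algorithm that generates in the limit with $i$ omissions and an adversary strategy that defeats any algorithm when $i+1$ omissions are permitted. The guiding idea is to encode the two regimes inside one collection by tagging each language with exactly $i+1$ distinguished ``signature'' elements drawn from the negative integers. With at most $i$ omissions, at least one signature always appears in the enumeration and can be used to certify membership in $\SC$; with $i+1$ omissions, the adversary can spend its entire budget to permanently hide the signatures and is then free to mimic the stage-wise diagonalization from the proof of \Cref{thm:union_ungen}.

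Concretely, I would take $\SC$ to consist of all languages of the form $L = F \cup B$, where $F \subseteq \ZZ_{<0}$ has $|F| = i+1$ and $B$ is drawn from a family that is generatable in the limit on its own but simultaneously rich enough to support the union-style switching of \Cref{thm:union_ungen}. A natural starting point is to take $B$ from the $\SC_1$-style template $\{A \cup P_n : A \subseteq \ZZ_{\ge 0},\, n \in \NN\}$ used in the proof of \Cref{thm:union_ungen}. The upper bound would then be established by the algorithm that outputs $0, 1, 2, \ldots$ in order, skipping already-enumerated elements, mirroring the generator used for $\SC_1$: once any signature element appears, the algorithm commits to a subfamily in which the $0,1,2,\ldots$ strategy is eventually safe inside the tail $P_n$. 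The lower bound would run the stage-wise adversary of \Cref{thm:union_ungen} within this single collection, using the $i+1$ omission budget exactly on a fixed signature set $F$ (never enumerated) and exploiting the tail-shift trick to force each committed output $z_{t_j}$ outside the eventual $K$.

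The main obstacle is designing the background family so that the upper bound survives while the lower bound genuinely fails. If the background always contains an infinite positive tail $P_n$ for some bounded $n$, the generator can defeat the adversary even in the $i+1$ omission regime by outputting arbitrarily large nonnegative integers, and the separation collapses. To break this, the background family must carry no uniformly safe direction, in the spirit of the combined $\SC_1 \cup \SC_2$ construction of \Cref{thm:union_ungen}: the adversary must be able, stage by stage, to shift the safe region out from under every algorithmic output. The delicate part of the construction is ensuring simultaneously that seeing one signature element is enough to restrict to a subfamily where a safe generation strategy does exist. Once the right background family is chosen, both the signature-based upper bound and the staged-adversary lower bound will follow from adaptations of \Cref{thm:union_ungen}, with the omission budget playing the role of the ``license'' that allowed the adversary to move between $\SC_1$ and $\SC_2$ in the original union argument.
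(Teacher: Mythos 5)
Your high-level idea is right --- pin a ``signature'' of size $i+1$ to one regime so that $i$ omissions cannot fully hide it but $i+1$ omissions can, and once the signature is fully hidden the construction degenerates to the $\SC_1\cup\SC_2$ diagonalization of \Cref{thm:union_ungen}. However, your proposal leaves the construction genuinely unfinished at the critical step, and the concrete choices you sketch do not work. First, allowing the signature $F$ to range over arbitrary $(i+1)$-element subsets of $\ZZ_{<0}$ creates a problem you do not address: when no signature element has appeared yet, the algorithm has no fixed ``safe direction'' to output from, because the intersection over all such $F$ is empty. Second, your proposed background family $\{A\cup P_n : A\subseteq\ZZ_{\ge 0},\,n\in\NN\}$ is generatable in the limit on its own by the $0,1,2,\ldots$ strategy regardless of omission budget, so the ``single family with a variable $F$ attached'' design cannot yield a lower bound --- as you yourself observe, the separation collapses. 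You recognize that you need to import the two-subcollection structure of \Cref{thm:union_ungen}, but you explicitly leave the choice unresolved (``once the right background family is chosen, both \ldots will follow''), which is precisely where the content of the proof lies.

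The paper resolves this by making two design choices that differ from yours. It fixes the signature as the concrete set $\{0,\ldots,i\}$ (in the nonnegatives), and it builds $\SC^i = \SC_1^i \cup \SC_2^i$ as a genuine two-sided union: $\SC_1^i$ consists of languages containing the full signature $\{0,\ldots,i\}$ together with an arbitrary set and a positive tail $P_j$, while $\SC_2^i$ consists of languages that are \emph{disjoint} from $\{0,\ldots,i\}$ and contain all of $\ZZ_{<0}$. The signature then routes between two fixed safe directions: if any signature element is seen, the target must lie in $\SC_1^i$ (so large positives are eventually safe); if no signature element is seen, under at most $i$ omissions the target must lie in $\SC_2^i$ (so negatives are always safe). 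With $i+1$ omissions, the adversary omits exactly $\{0,\ldots,i\}$ and the projection of $\SC^i$ onto $\ZZ\setminus\{0,\ldots,i\}$ is isomorphic to the $\SC_1\cup\SC_2$ of \Cref{thm:union_ungen}, giving the lower bound. Your sketch never pins down this two-sided structure or the second (negative) safe direction, and without them neither the upper nor the lower bound goes through.
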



We prove \Cref{thm:omit_one} in two parts.
For any $i \in \NN$, define
$\SC_1^i = \bigcup_{j \in \NN} \{\{0, \dots i\} \cup A \cup P_j \mid A \subseteq \ZZ\}$,
$\SC_2^i = \{A \cup \ZZ_{< 0} \mid A \subseteq \ZZ \setminus \{0, \dots i\}\}$,
and $\SC^i = \SC_1^i \cup \SC_2^i$.
We first show that $\SC^i$ can be generated with $i$ omissions.

\begin{lemma}
    \label{lem:omit_i}
    For any $i \in \NN$, the collection $\SC^i$ is generatable in the limit with $i$ omissions.
\end{lemma}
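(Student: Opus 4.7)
The plan is to exploit the structural separation between $\SC_1^i$ and $\SC_2^i$: every language in $\SC_1^i$ contains the block $\{0,1,\dots,i\}$, while every language in $\SC_2^i$ is disjoint from that block. Since $\{0,1,\dots,i\}$ has $i+1$ elements and the adversary may omit at most $i$ strings, if the target $K$ lies in $\SC_1^i$ then at least one integer in $\{0,\dots,i\}$ is forced to appear in the enumeration; conversely, if $K$ lies in $\SC_2^i$, no integer from $\{0,\dots,i\}$ can ever appear, since no such integer is in $K$ to begin with. This gives the algorithm a clean yes/no signal for distinguishing the two subcollections within finite time.

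Based on this, I would design a two-phase algorithm. While no element of $\{0,\dots,i\}$ has been observed in the enumeration (Phase~1), the algorithm outputs successive negative integers $-1,-2,-3,\dots$, skipping any already enumerated. The moment some element of $\{0,\dots,i\}$ appears, the algorithm permanently switches to Phase~2 and starts producing a strictly increasing sequence of non-negative integers, at each step outputting the smallest integer greater than its previous Phase~2 output that has not been enumerated or emitted. This makes the Phase~2 outputs tend to infinity.

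Correctness splits into two cases. If $K \in \SC_2^i$, then by the observation above Phase~2 is never triggered, so every output is a negative integer and lies in $\ZZ_{<0} \subseteq K$. If $K \in \SC_1^i$, then $K \supseteq \{0,\dots,i\} \cup P_j$ for some $j \in \NN$; Phase~2 is triggered at some finite time, and once the Phase~2 counter exceeds $j$ all subsequent outputs are non-negative integers $\geq j$, hence in $P_j \subseteq K$. I do not expect a real obstacle here: the only routine checks are that each phase always admits a valid unseen choice (at any step only finitely many integers have been enumerated or previously output, so the required minimum exists) and that outputs across the two phases stay distinct (automatic, since Phase~1 outputs are strictly negative while Phase~2 outputs are non-negative).
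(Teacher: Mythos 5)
Your proposal is correct and takes essentially the same approach as the paper: detect whether any element of $\{0,\dots,i\}$ has appeared (which, given $\abs{\{0,\dots,i\}} = i+1 > i$, must happen exactly when $K \in \SC_1^i$), output negative integers before detection and unseen nonnegative integers tending to $+\infty$ after. The paper's algorithm uses $\min(\{0,x_0,z_0,\dots,x_t\})-1$ and $\max(\{t,x_0,z_0,\dots,x_t\})+1$ to realize the same two phases, but the underlying idea and case analysis are identical to yours.
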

\begin{proof}
    Consider the algorithm $G$ whose outputs are given by
    \[
        G(x_0, \dots, x_t) =
        \begin{cases}
            \max\left(\{t, x_0, z_0, \dots, x_t\}\right) + 1 & (\{0, \dots, i\} \cap S_t) \neq \emptyset \\
            \min(\{0, x_0, z_0, \dots, x_t\}) -1 & \text{otherwise}
        \end{cases}.\]
    We now consider the two possible cases where either $K \in \SC_1^i$ or $K \in \SC_2^i$.

    In the case where $K \in \SC_1^i$, there must be some $j$ such that $P_j \subseteq K$.
    In addition, since $\{0, \dots, i\} \subseteq K$, and the algorithm can omit at most $i$ strings,
    there must be some time $t'$ in every enumeration when $(\{0, \dots, i\} \cap S_{t'}) \neq \emptyset$.
    We claim that after time $\tst = \max(j, t')$, the algorithm $G$ always generates correct strings.
    For any time $t \ge \tst$, we have $(\{0, \dots, i\} \cap S_t) \neq \emptyset$, so $G$
    outputs $z_t = \max\left(\{t, x_0, z_0, \dots, x_t\}\right) + 1$.
    In particular $z_t$ is an unseen integer that is at least $t > j$.
    Since $P_j \subseteq K$, every integer larger than $j$ is in $K$, implying that
    $z_t$ is an unseen integer contained in $K$.

    In the other case where $K \in \SC_2^i$, note that $\{0, \dots, i\} \cap K = \emptyset$,
    so the algorithm's output at time $t$ is $z_t = \min(\{0, x_0, z_0, \dots, x_t\}) - 1$.
    In particular, $z_t$ is always a negative unseen string.
    Since $\ZZ_{< 0} \subseteq K$, the output $z_t$ must be a correct unseen string.
    Thus $G$ generates in the limit with $i$ omissions.
\end{proof}

We now show that $\SC^i$ cannot be generated with $i+1$ omissions.
This proof is very similar to the proof of \Cref{thm:union_ungen}.

\begin{lemma}
    \label{lem:omit_i_1}
    For any $i \in \NN$, the collection $\SC^i$ cannot be generated in the limit with $i+1$ omissions.
\end{lemma}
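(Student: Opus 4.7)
The plan is to adapt the staged adversarial construction from \Cref{thm:union_ungen} to the omission setting. The key twist is that the enlarged budget $i+1$ lets the adversary omit all of $\{0, 1, \ldots, i\}$, so each stage's prefix is simultaneously a valid enumeration of a language in $\SC_1^i$ (viewing $\{0, \ldots, i\}$ as the permitted omissions) and of a language in $\SC_2^i$ (with zero omissions); the algorithm cannot tell which commitment we will ultimately make. Assume for contradiction that some $G$ generates in the limit for $\SC^i$ with $i+1$ omissions. I will build, stage by stage, an enumeration $x$ of some $K \in \SC_2^i$ with $0$ omissions, together with times $t_0 < t_1 < \cdots$ at which $z_{t_j} \notin K$.

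In stage $0$, set $M_0 = i+1$ and enumerate $M_0, M_0+1, M_0+2, \ldots$. The growing prefix is a valid enumeration of $\NN \in \SC_1^i$ with the $i+1$ omissions $\{0, \ldots, i\}$, so $G$ must eventually output some $z_t \in \NN \setminus S_t$, a set which for $t$ large consists only of $\{0, \ldots, i\}$ and of positive integers beyond the enumerated frontier. Let $t_0$ be the first time at which one of the two trigger conditions holds: either (a) $z_{t_0} \in \{0, \ldots, i\}$, or (b) $z_{t_0}$ is strictly greater than every integer enumerated so far. Then place $-1$ at position $t_0+1$, and enter stage $1$ with $M_1 := z_{t_0}+1$ in case (b) (so $z_{t_0}$ is explicitly skipped forever) and $M_1 := M_0 + t_0 + 1$ in case (a). Stage $j \ge 1$ is analogous, using target language $L_j := \{0,\ldots,i\} \cup S_{t_{j-1}+1} \cup P_{M_j} \in \SC_1^i$, continuing the enumeration as $M_j, M_j+1, \ldots$ at positions $t_{j-1}+2, t_{j-1}+3, \ldots$, and inserting $-(j+1)$ at position $t_j+1$.

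At the limit, $K := \bigcup_t \{x_t\}$ contains $\ZZ_{<0}$ (one negative per stage) and only positive integers strictly greater than $i$, hence $K \in \SC_2^i$ and $x$ enumerates $K$ with $0$ omissions. Each $z_{t_j}$ is either in $\{0, \ldots, i\}$---disjoint from $K$---or is a positive integer that the choice of $M_{j+1}$ explicitly excludes from all subsequent enumeration; in either case $z_{t_j} \notin K$, contradicting that $G$ generates in the limit for $K$ with $i+1$ omissions.

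The main obstacle is verifying that each stage terminates at a finite $t_j$, i.e., that one of (a), (b) always fires. This is precisely where the budget $i+1$ is used tightly: since the stage-$j$ prefix omits only the $i+1$ strings $\{0, \ldots, i\}$ of $L_j \in \SC_1^i$, the hypothesis that $G$ generates in the limit for $L_j$ with $i+1$ omissions forces $G$ eventually to output an element of $L_j \setminus S_t$, which for large $t$ is precisely $\{0, \ldots, i\}$ together with the unenumerated tail of $P_{M_j}$---exactly the strings triggering (a) or (b). With only $i$ omissions, at least one element of $\{0, \ldots, i\}$ would have to be enumerated, giving \Cref{lem:omit_i}'s algorithm the ``signature'' that distinguishes $\SC_1^i$-targets from $\SC_2^i$-targets.
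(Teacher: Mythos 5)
Your proposal is correct and takes essentially the same approach as the paper: a staged adversary that, at each stage $j$, enumerates a language $L_j \in \SC_1^i$ with the $i+1$ omissions being exactly $\{0,\ldots,i\}$, waits for $G$ to emit an unseen string of $L_j$, records that time as $t_j$, appends a fresh negative, and moves on; in the limit the enumeration describes some $K \in \SC_2^i$ (with zero omissions) and every $z_{t_j}$ lies outside $K$. The only cosmetic difference is in how the next stage's starting point is chosen: the paper uniformly uses $1 + m_{1+t_j}$ where $m$ is the maximum over \emph{all} enumerated and output integers so far, whereas you case-split on whether $z_{t_j} \in \{0,\ldots,i\}$ (handled automatically since $K$ never touches $\{0,\ldots,i\}$) or $z_{t_j}$ is beyond the current frontier (handled by starting the next block at $z_{t_j}+1$). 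Both choices guarantee the invariant that each $z_{t_j}$ is never subsequently enumerated; the paper's is slightly more wasteful but avoids the case analysis. One small detail you gloss over with ``analogous'' is that $t_j$ must be chosen at least $t_{j-1}+2$ (so the already-committed prefix isn't overwritten), which follows from the definition of generation in the limit but deserves a sentence.
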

\begin{proof}
    Assume for contradiction that there exists an algorithm $G$ which generates in the limit for $\SC$.
    We will inductively construct an enumeration $\{x_k\}_{k \in \NN}$ of a language $K \in \SC_2$ such that there
    exists an infinite sequence of times $t_0 < t_1 < \dots$ where the string output by $G$ at each time $t_i$
    is not in $K$.
    This would imply that $G$ does not generate in the limit for $K$.

    Let $L_0 = \NN$ and consider the adversary enumeration
    $\seqdef{x}{0}$ where $\seq{x}{0}{k} = k + i + 1$.
    Since $\seqdef{x}{0}$ forms an enumeration of $\NN$ with $i+1$ omissions, there must exist a time $t_0$
    where the algorithm outputs a string $z_{t_0} \in \NN \setminus \{\seq{x}{0}{0}, \dots, \seq{x}{0}{t_0}\}$.
    For $k \le t_0 + 1$, we set
    \[x_k =
    \begin{cases}
        k + i + 1 & k \le t_0 \\
        -1 & k = 1 + t_0
    \end{cases}.\]

    We now proceed iteratively in stages, where during each stage $j$,
    we extend the (partial) enumeration $x$ and introduce a new time $t_j$ for which the algorithm's
    output $z_{t_j}$ is incorrect.
    At the beginning of stage $j + 1$, let $t_0 < \dots < t_j$ be the current increasing sequence of times
    where the algorithm makes a mistake,
    and let $x_0$, \dots, $x_{1 + t_j}$ be the current partial enumeration.
    Recall that we write $S_t$ to denote the enumerated strings $\{x_0, \dots, x_t\}$.
    Inductively assume that $\{0, -1, \dots, -(j+1)\} \subseteq S_{1 + t_j}$,
    and that for each of the times $t_i$, the algorithm's output $z_{t_i}$ is a \emph{nonnegative}
    integer which is not contained in $S_{1 + t_j}$.

    Let $m_t = \max\{x_0, z_0, \dots, x_t, z_t\}$ be the maximum integer
    output by either the adversary or the algorithm up to time $t$.
    Now consider the language
    $L_{j+1} = \{0, \dots, i\} \cup S_{1 + t_j} \cup \{1 + m_{1 + t_j}, 2 + m_{1 + t_j}, \dots\}$.
    Clearly $L_{j+1} \in \SC_1$.
    Thus consider the enumeration $\seqdef{x}{j+1}$ where
    \[\seq{x}{j+1}{i} = \begin{cases}
                            x_i & i \le 1 + t_j \\
                            m_{1 + t_j} + i - 1 - t_j & i \ge 2 + t_j
    \end{cases}.\]
    Since $\seqdef{x}{j+1}$ is an enumeration of $L_{j+1} \setminus \{0, \dots, i\}$,
    the enumeration $\seqdef{x}{j+1}$ is an enumeration of $L_{j+1}$ with $i+1$ omissions.
    Since $G$ generates in the limit with $i+1$ omissions, there must exist a time $t_{j+1} > 1 + t_j$
    where the output $z_{t_{j+1}}$ is in
    $L_{j+1} \setminus \{\seq{x}{j+1}{0}, \dots, \seq{x}{j+1}{t_{j+1}}\}$.
    Thus we extend the enumeration $x$ for $i \in [2 + t_j, 1 + t_{j+1}]$ by setting
    \[x_i =
    \begin{cases}
        \seq{x}{j+1}{i} & 2 + t_j \le i \le t_{j+1} \\
        -(j+2) & i = 1 + t_{j+1}
    \end{cases}.\]
    It remains to check that the inductive hypotheses are satisfied.
    First, since we do not change the sequence $x_0$, \dots, $x_{1 + t_{j}}$,
    the algorithm's output up until time $1 + t_j$ remains the same.
    In particular, the values of $z_{t_0}$, \dots, $z_{t_j}$ remain unchanged.
    Furthermore, since the additional enumerated nonnegative strings $x_{2 + t_j}$, \dots, $x_{t_{j+1}}$
    are all greater than $m_{1 + t_j}$, it remains true that each of the outputs $z_{t_i}$
    are not contained in $S_{1 + t_{j+1}}$.
    It is also clear that $\{0, -1, \dots, -(j + 2)\} \subseteq S_{1 + t_{j+1}}$.
    Finally, the new algorithm output $z_{t_{j + 1}}$ is both nonnegative and not contained in
    $S_{1 + t_{j+1}}$.

    To conclude, consider the language $K = \bigcup_{k \in \NN} \{x_k\}$, where the sequence $x$
    is constructed from the infinite iterative procedure described above.
    By construction, $\ZZ_{< 0} \subseteq K$ and $\{0, \dots, i\} \cap K = \emptyset$, so $K \in \SC_2$.
    Furthermore, there exists an infinite sequence of times $t_0 < t_1 < \dots$
    where the algorithm's output $z_{t_k}$ at each such time is not contained in $K$.
    Thus $G$ does not generate in the limit.
\end{proof}

We now combine the above lemmas to establish \Cref{thm:omit_one}.

\begin{proof}[Proof of \Cref{thm:omit_one}]
    For any $i \in \NN$, by \Cref{lem:omit_i} and \Cref{lem:omit_i_1}, the collection $\SC^i$
    is generatable in the limit with $i$ omissions, but is not generatable in the limit with $i+1$ omissions.
\end{proof}

    \section{Generation with Noise.}\label{sec:noisy}Recall that in the noisy model defined in~\cite{RR25},
the adversary is allowed to pick a noise level $\nst$
and insert $\nst$ extraneous strings into its enumeration.
Crucially, the algorithm is not told about the number of strings inserted into the enumeration.
In this section, we explore a more fine-grained notion of generation with noise
for each noise-level where the algorithm is informed about the number of inserted strings.

%
%
%


\begin{definition}[Noisy enumeration with noise level $i$]
    For any infinite language $K$ and integer $i \in \NN$, a noisy enumeration
    of $K$ with noise level $i$ is any infinite sequence $x_0$, $x_1$, \dots\ without repetitions,
    such that $K \subseteq \bigcup_{i \in \NN} \{x_i\}$ and
    $\abs{\bigcup_{i \in \NN} \{x_i\} \setminus K} \le i$.
\end{definition}

We now define the notions of generation in the limit and non-uniform generation for each noise level.

\begin{definition}[Generation in the limit with noise level $i$]
    For any $i \in \NN$, an algorithm $G$ generates in the limit with noise level $i$
    if for every $K \in \SC$ and every enumeration $x$ of $K$ with noise level at most $i$,
    there exists a time $\tst$ such that for all $t \ge \tst$,
    the string generated by the algorithm at time $t$ belongs to $K \setminus \{x_0, \dots x_t\}$.
\end{definition}

\begin{definition}[Non-uniform generation with noise level $i$]
    For any $i \in \NN$, an algorithm $G$ non-uniformly generates with noise level $i$
    if for every $K \in \SC$, there exists a time $\tst$ such that for
    every enumeration $x$ of $K$ with noise level at most $i$ and all $t \ge \tst$,
    the string generated by the algorithm at time $t$ belongs to $K \setminus \{x_0, \dots x_t\}$.
\end{definition}



Clearly, a collection which is generatable in the limit with noise level $1$ is generatable in the limit without noise,
and a collection which is noisily generatable in the limit is
generatable with noise level $i$ for every $i$.
In fact, we show that both of these containments are strict.
We first prove the following lemma which gives a necessary condition
for generating in the limit with noise level $i$.

\begin{lemma}
    \label{lem:noise_i_condition}
    Let $\SC$ be a collection over a countable universe $U$.
    If $\SC$ is generatable in the limit with noise level $i$,
    then for any universe $U'$ with $\abs{U \setminus U'} \le i$,
    the projection of $\SC$ onto $U'$ is generatable in the limit.
\end{lemma}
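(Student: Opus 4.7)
The plan is to reduce generation in the limit for the projection of $\SC$ onto $U'$ to generation with noise level $i$ for the original collection $\SC$. Let $G$ be a generator witnessing that $\SC$ is generatable in the limit with noise level $i$, set $n := |U \setminus U'| \le i$, and enumerate $U \setminus U' = \{u_1, \ldots, u_n\}$ (which is finite). Given any $L' = L \cap U'$ with $L \in \SC$ and any enumeration $y_0, y_1, \dots$ of $L'$, I would form the sequence $x$ whose first $n$ entries are $u_1, \ldots, u_n$ and whose entries from position $n$ onward are $y_0, y_1, \dots$.

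The key step is to verify that this $x$ is a valid noisy enumeration of $L$ with noise level at most $i$. The absence of repetitions is immediate since $L' \subseteq U'$ is disjoint from $U \setminus U'$, and $y$ itself has no repetitions. For coverage, note that
\[
  \bigcup_{t} \{x_t\} = (U \setminus U') \cup L' \supseteq L,
\]
because $L = (L \cap U') \cup (L \cap (U \setminus U')) \subseteq L' \cup (U \setminus U')$. The number of extraneous strings is
\[
  \bigl|\textstyle\bigcup_t \{x_t\} \setminus L\bigr| = |(U \setminus U') \setminus L| \le n \le i,
\]
so $x$ has noise level at most $i$.

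Now I define the desired generator $G'$ for the projection as follows: on input $y_0, \ldots, y_t$, simulate $G$ on the prefix $x_0, \ldots, x_{t+n}$ and output $G$'s string $z_{t+n}$. Since $G$ generates in the limit with noise level $i$ on $\SC$, there is a time $t^*$ (depending on $L$ and $x$) with $z_s \in L \setminus \{x_0, \ldots, x_s\}$ for all $s \ge t^*$. Once $s \ge \max(t^*, n)$, the prefix $\{x_0, \ldots, x_s\}$ contains all of $U \setminus U'$, so $L \setminus \{x_0, \ldots, x_s\} \subseteq L \cap U' = L'$; more precisely it equals $L' \setminus \{y_0, \ldots, y_{s-n}\}$. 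Reindexing $t = s - n$ shows that for every $t \ge \max(t^* - n, 0)$ the output of $G'$ lies in $L' \setminus \{y_0, \ldots, y_t\}$, as required.

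The main ``obstacle'' is really just bookkeeping: one must check that the fixed noisy prefix $u_1, \ldots, u_n$ is consistent (no collision with $y$, and contributes at most $n \le i$ to the noise level) and that the time shift by $n$ is transparent. The construction parallels the simulation-style arguments used in \Cref{thm:equi_unif_lossy_noisy} and \Cref{thm:equi_limit_lossy_noisy}, where extra ``padding'' at the start of the enumeration is used to translate between models.
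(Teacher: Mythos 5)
Your proof is correct and follows essentially the same approach as the paper: prepend the finitely many strings of $U \setminus U'$ as a "noisy prefix" to any enumeration of the projected language, observe that the resulting sequence is a noisy enumeration of $L$ with noise level at most $i$, and then define the projected generator by simulating $G$ on this shifted sequence. The paper's version is identical in substance (it writes $G'(x_{0:t}) = G(y_1,\dots,y_d,x_0,\dots,x_t)$ rather than speaking of a time shift by $n$), differing only in notation and in that it explicitly dispatches the trivial case where the projection is empty.
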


\begin{proof}
    For any integer $i$ and collection $\SC$ over a universe $U$,
    let $G$ be an algorithm that generates in the limit with noise level $i$ for $\SC$.
    Let $U'$ be any universe with $\abs{U \setminus U'} \le i$ and let $\SC'$
    be the projection of $\SC$ onto $U'$.
    If $\SC'$ is empty, then the claim is trivially satisfied.
    Otherwise, let $d = \abs{U \setminus U'}$ and arbitrarily index the elements in $U \setminus U'$
    by $y_1$, \dots, $y_d$.
    Consider the algorithm $G'$ whose outputs are given by
    \[G'(x_{0:t}) = G(y_1, \dots, y_d, x_0, \dots, x_t).\]
    We claim that $G'$ generates in the limit for $\SC'$.

    Let $K'$ be an arbitrary language in $\SC'$ and $x$ be an arbitrary enumeration of $K'$.
    Since $\SC'$ is the projection of $\SC$ onto $U'$, there must exist a language $K \in \SC$
    such that $K' = K \cap U'$.
    Now let $E = \{y_1, \dots, y_d, x_0, x_1, \dots\}$.
    Since $x$ forms an enumeration of $K'$ and $\{y_1, \dots, y_d\} = U \setminus U'$,
    we have $E = K' \cup (U \setminus U')$.
    Furthermore, since $K' \subseteq K$ and $\abs{U \setminus U'} \le i$,
    we have that $\abs{E \setminus K} \le i$.
    Thus the sequence $y_1$, \dots, $y_d$, $x_0$, $x_1$, \dots\ is a noisy
    enumeration of $K$ with noise level at most $i$.

    Since $G$ generates in the limit with noise level $i$, there must be an index $\tst$
    such that \[G(y_1, \dots, y_d, x_0, \dots, x_t) \in K \setminus \{y_1, \dots, y_d, x_0, \dots, x_t\}\]
    for all $t \ge \tst$.
    Now note that $\{y_1, \dots, y_d, x_0, \dots, x_t\} = (U \setminus U') \cup \{x_0, \dots, x_t\}$.
    Thus we have that $K \setminus \{y_1, \dots, y_d, x_0, \dots, x_t\} = K' \setminus \{x_0, \dots, x_t\}$.
    This implies that for all times $t \ge \tst$, the output $G'(x_{0:t})$ is in $K' \setminus \{x_0, \dots, x_t\}$.
    Thus $G'$ generates in the limit for $\SC'$, completing the proof.
\end{proof}

We also need the following concepts of mappings and isomorphisms between collections.

%

\begin{definition}
    Let $\SC$ be a collection over a universe $U$ and let $f \colon U \to U'$ be any function.
    For any $L \in \SC$, we define $f(L)$ to be the set $\{f(x) \mid x \in L\}$,
    which simply maps every element in $L$ according to $f$.
    Similarly, we define the collection $f(\SC) = \{f(L) \mid L \in \SC\}$.
\end{definition}

\begin{definition}[Isomorphism]
    Let $\SC$ be a collection over a universe $U$ and $\SC'$ be a collection over a universe $U'$.
    We say that $\SC$ is isomorphic to $\SC'$ if there exists a bijection $f \colon U \to U'$
    such that $f(\SC) = \SC'$.
\end{definition}
Clearly, if two collections are isomorphic, then they are both generatable in the limit, or neither are.

We now prove the portion of \Cref{thm:fine_grained} concerning noise,
showing a separation between noise levels $i$ and $i+1$ for every $i$.
This also resolves a question of~\cite{RR25}, in which they ask if there exists a language
that is generatable in the limit without noise, but is not noisily generatable in the limit.
This question is answered in the affirmative by setting $i = 0$.
Interestingly, we note that the collection $\SC^i$ used in the following
proof is identical to the collection $\SC^i$ used in the proof of \Cref{thm:omit_one}.

\begin{restatable}{theorem}{noisehierarchy}
    \label{thm:noise_hierarchy}
    For every integer $i \in \NN$, there exists a collection which is generatable
    in the limit with noise level $i$, but is not generatable in the limit with noise level $i+1$.
    Therefore, for $i=0$ in particular, there is a collection that is generatable in the limit,
    but not so if a single incorrect string is output by the adversary.
\end{restatable}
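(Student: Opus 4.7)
The plan is to reuse the very same collection $\SC^i = \SC_1^i \cup \SC_2^i$ that appears in the proof of $\Cref{thm:omit_one}$. The structural fact driving everything is that every $L \in \SC_1^i$ contains the $i+1$ strings $\{0,\dots,i\}$ while every $L \in \SC_2^i$ is disjoint from $\{0,\dots,i\}$, so the noise budget's ability to either suppress or forge this set of size $i+1$ is precisely what separates noise level $i$ from noise level $i+1$.

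For the positive direction I would use an algorithm whose decision rule at time $t$ is: if $\{0,1,\dots,i\} \subseteq S_t$, output the smallest unused integer strictly greater than every integer in $S_t$ and every previous output; otherwise, output the largest unused integer strictly less than every integer in $S_t$ and every previous output. For $K \in \SC_1^i$, the set $\{0,\dots,i\}$ is contained in $K$, so each of its $i+1$ elements must eventually appear in the enumeration (a noise budget of $i$ cannot suppress any correct string of $K$), after which the algorithm outputs progressively larger positive integers that, for $t$ large enough, lie in the tail $P_j \subseteq K$. For $K \in \SC_2^i$, the set $\{0,\dots,i\}$ is disjoint from $K$, so any element of $\{0,\dots,i\}$ that appears in the enumeration must be noise; since the noise budget is only $i < i+1$, at least one element of $\{0,\dots,i\}$ is absent from $S_t$ at every time step, and the algorithm always outputs fresh negative integers, which lie in $\ZZ_{<0} \subseteq K$.

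For the negative direction I would mirror the diagonalization of $\Cref{thm:union_ungen}$ and $\Cref{lem:omit_i_1}$. Assuming $G$ generates for $\SC^i$ with noise level $i+1$, stage $0$ presents the clean enumeration $0,1,2,\dots$ of $L_0 = \NN \in \SC_1^i$ and waits for $G$ to output some $z_{t_0} > t_0$; each subsequent stage $j+1$ inserts $-(j+1)$ at time $t_j+1$ and then enumerates $z_{t_j}+2, z_{t_j}+3,\dots$ (corresponding to an intermediate $\SC_1^i$-target of the form $S_{t_j+1} \cup P_{z_{t_j}+2}$) until $G$ produces a new wrong prediction $z_{t_{j+1}} > z_{t_j} + (t_{j+1} - t_j)$. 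The essential new ingredient compared to the earlier proofs is the bookkeeping of the final target: we declare $K := \bigl(\bigcup_k \{x_k\}\bigr) \setminus \{0,\dots,i\}$, treating exactly those $i+1$ integers as noise. This $K$ lies in $\SC_2^i$ because it contains $\ZZ_{<0}$ via the inserted negatives and its nonnegative part is a union of finite intervals lying strictly above $i$; the enumeration then contains exactly $i+1$ noise strings, matching the allowed budget of $i+1$.

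The main obstacle is verifying that each wrong prediction $z_{t_j}$ remains outside $K$ for all time, i.e., that no later stage accidentally enumerates $z_{t_j}$. This follows from the monotone growth condition $z_{t_{j+1}} > z_{t_j} + (t_{j+1} - t_j)$ forced by the algorithm's "unseen" requirement, which ensures that stage $j+1$ begins enumerating strictly past $z_{t_j}$ and never revisits it. Consequently $G$ outputs strings outside $K$ at the infinite sequence of times $t_0 < t_1 < \cdots$, contradicting generation in the limit with noise level $i+1$. The case $i = 0$ specializes to the Raman--Raman question, giving a collection that is generatable in the limit but fails under even a single adversarial noise string.
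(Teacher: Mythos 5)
Your positive direction is essentially identical to the paper's (modulo a small imprecision: for $K \in \SC_2^i$, your algorithm does not necessarily output negative integers until some negative string has actually appeared in $S_t$ or a previous output, whereas the paper anchors the minimum at $0$ via $\min(\{0, x_0, z_0, \dots\}) - 1$; for generation in the limit this makes no difference, since all of $\ZZ_{<0}$ is eventually enumerated).

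For the negative direction you take a genuinely different route. The paper first proves a general projection lemma (\Cref{lem:noise_i_condition}): if $\SC$ is generatable with noise level $i$, then the projection of $\SC$ onto any $U'$ with $\abs{U \setminus U'} \le i$ is generatable in the limit. It then observes that the projection of $\SC^i$ onto $\ZZ \setminus \{0,\dots,i\}$ is isomorphic to the collection $\SC$ from \Cref{thm:union_ungen}, which is already known not to be generatable. You instead run the diagonalization directly on $\SC^i$, which is more self-contained but gives up the reusable projection lemma. The content is the same — the projection lemma's proof itself prepends the $i+1$ noise strings to the enumeration, and your stage-$0$ enumeration $0,1,2,\dots$ plays that role inline.

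There is one genuine gap in your diagonalization as written. You must enforce $t_0 \ge i$ before ending stage $0$. If $t_0 < i$, then for any $j' \le i$ with $t_0 < j' \le z_{t_0}+1$ the integer $j'$ is never enumerated (stage $0$ stops at $t_0 < j'$ and every later stage enumerates only negatives and integers $\ge z_{t_0}+2 > j'$), so $\{0,\dots,i\} \not\subseteq \bigcup_k \{x_k\}$. This breaks your claim that exactly the $i+1$ integers $\{0,\dots,i\}$ constitute the noise, and more importantly it means that no intermediate target $L_{j+1} = S_{t_j+1} \cup P_{z_{t_j}+2}$ can lie in $\SC_1^i$ (members of $\SC_1^i$ must contain $\{0,\dots,i\}$, and a noisy enumeration must contain all of the target). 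The fix is easy — the adversary controls when stage $0$ ends, and since $G$ must eventually generate correctly for $\NN$ on the clean enumeration $0,1,2,\dots$, the adversary can simply wait until the first time $t_0 \ge i$ at which $z_{t_0} > t_0$. Once $t_0 \ge i$ is imposed, the rest of your argument (the monotone growth of the $z_{t_j}$, the inclusion $\ZZ_{<0} \subseteq K$, $K$ disjoint from $\{0,\dots,i\}$, noise level exactly $i+1$, and the $z_{t_j}$ being permanently trapped outside $K$) goes through.
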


\begin{proof}
    For any $i \in \NN$, let
    $\SC_1^i = \bigcup_{j \in \NN} \{\{0, \dots i\} \cup A \cup P_j \mid A \subseteq \ZZ\}$,
    $\SC_2^i = \{A \cup \ZZ_{< 0} \mid A \subseteq \ZZ \setminus \{0, \dots i\}\}$,
    and $\SC^i = \SC_1^i \cup \SC_2^i$.
    We first construct an algorithm to show that $\SC^i$ is generatable in the limit with noise level $i$.
    Consider the algorithm $G$ whose outputs are given by
    \[
        G(x_0, \dots, x_t) =
        \begin{cases}
            \max\left(\{t, x_0, z_0, \dots, z_{t-1}, x_t\}\right) + 1 & \{0, \dots, i\} \subseteq S_t \\
            \min(\{0, x_0, z_0, \dots, z_{t-1}, x_t\}) -1 & \text{otherwise}
        \end{cases}.\]
    We now consider the two possible cases where either $K \in \SC_1^i$ or $K \in \SC_2^i$.

    In the case where $K \in \SC_1^i$, there must be some $j$ such that $P_j \subseteq K$.
    In addition, since $\{0, \dots, i\} \subseteq K$, there must be some time $t'$ in every enumeration
    when $\{0, \dots, i\} \subseteq S_{t'}$.
    We claim that after time $\tst = \max(j, t')$, the algorithm $G$ always generates correct strings.
    For any time $t \ge \tst$, we have $\{0, \dots, i\} \subseteq S_t$, so $G$
    outputs $z_t = \max\left(\{t, x_0, z_0, \dots, z_{t-1}, x_t\}\right) + 1$.
    In particular, $z_t$ is an unseen integer which is at least $t > j$.
    Since $P_j \subseteq K$, every integer larger than $j$ is in $K$, implying that
    $z_t$ is an unseen integer contained in $K$.

    In the other case where $K \in \SC_2^i$, note that $\{0, \dots, i\} \cap K = \emptyset$.
    Since the adversary can insert at most $i$ strings into the enumeration,
    there will never exist a time when $\{0, \dots, i\} \subseteq S_t$.
    Thus at any time $t$, the algorithm's output is $z_t = \min(\{0, x_0, z_0, \dots, z_{t-1}, x_t\}) - 1$.
    In particular, $z_t$ is always a negative unseen string.
    Since $\ZZ_{< 0} \subseteq K$, the output $z_t$ must be a correct unseen string.
    Thus $G$ generates in the limit with noise level $i$.

    To see that $\SC^i$ is not generatable in the limit with noise level $i+1$,
    \cref{lem:noise_i_condition} implies that it suffices to show that the projection of $\SC^i$
    onto $\ZZ \setminus \{0, \dots, i\}$ is not generatable in the limit.
    The projection of $\SC^i$ onto $\ZZ \setminus \{0, \dots, i\}$
    is exactly $B^i = B_1^i \cup B_2^i$ where
    $B_1^i = \bigcup_{j = i+1}^{n} \{A \cup P_j \mid A \subseteq \ZZ \setminus \{0, \dots, i\}\}$ and
    $B_2^i = \{A \cup \ZZ_{< 0} \mid A \subseteq \ZZ \setminus \{0, \dots, i\}\}$.
    Now consider the collections $\SC_1 = \bigcup_{i \in \NN} \{A \cup P_i \mid A \subseteq \ZZ\}$,
    $\SC_2 = \{A \cup \ZZ_{< 0} \mid A \subseteq \ZZ\}$, and $\SC = \SC_1 \cup \SC_2$
    used in the proof of~\Cref{thm:union_ungen}.
    Since $\SC$ is not generatable in the limit, it suffices to show that $B^i$ and $\SC$
    are isomorphic.
    Consider the bijection $f \colon \ZZ \to \ZZ \setminus \{0, \dots, i\}$ given by
    \[f(x) =
    \begin{cases}
        x & x < 0 \\
        x + i + 1 & x \ge 0
    \end{cases}.\]
    It is easy to see that $f(\SC) = B^i$, so the two collections are isomorphic.
    Thus since $\SC$ is not generatable in the limit, $B^i$ is also not generatable in the limit as desired.
\end{proof}

We now show that there exists a collection that is generatable in the limit with $i$ elements of noise
for every $i \in \NN$, but is not generatable with noise (in the original model
where the algorithm is not told the noise level).
This implies that knowledge about the noise level does give an algorithm more power.

\noisesensitive*

\eat{
    \begin{theorem}
        There exists a collection which is generatable in the limit with noise level $i$ for any $i \in \NN$,
        but is not noisily generatable in the limit.
    \end{theorem}
}

\begin{proof}
    Consider the collections
    $\SC_1 = \{P_i \mid i \in \NN\}$,
    $\SC_2 = \{A \cup \ZZ_{< 0} \mid A \subseteq \NN\}$,
    and $\SC = \SC_1 \cup \SC_2$.
    We first show that $\SC$ is generatable in the limit with noise level $i$ for any $i \in \NN$.
    For an arbitrary $i \in \NN$, consider the algorithm $G_i$ whose outputs are given by
    \[
        G_i =
        \begin{cases}
            \min(\{0, x_0, z_0, \dots, x_t\}) -1 & \{-1, \dots, -(i+1)\} \subseteq S_t \\
            \max\left(\{t, x_0, z_0, \dots, x_t\}\right) + 1 & \text{otherwise}
        \end{cases}
        .\]
    We consider the two cases where $K \in \SC_1$ or $K \in \SC_2$.

    In the case where $K \in \SC_2$, the adversary must eventually enumerate
    $\ZZ_{< 0}$, so there must exist some time $t'$ such that $\{-1, \dots, -(i+1)\} \subseteq S_t$.
    For all $t > t'$, the algorithm outputs a negative unseen integer.
    Since $\ZZ_{< 0} \subseteq K$, the algorithm $G_i$ generates in the limit.

    In the other case where $K \in \SC_1$, note that $\{-1, \dots, -(i+1)\} \cap K = \emptyset$.
    Since the adversary is allowed to insert at most $i$ strings in its enumeration,
    there will never be a time when $\{-1, \dots, -(i+1)\} \subseteq S_t$.
    Thus, for all times $t$, we have $z_t = \max\left(\{t, x_0, z_0, \dots, x_t\}\right) + 1$.
    Since $K \in \SC_1$, there is some $j$ such that $K = P_j$.
    For all times $t > j$, we can see that $z_t$ is an unseen string that is at least $j$.
    Thus $z_t \in K$ for all times $t > j$, so $G_i$ generates in the limit.

    We now show that $\SC$ is not generatable in the limit with noise.
    This portion of the proof is once again very similar to the proof of \Cref{thm:union_ungen}.
    Assume for contradiction that there exists an algorithm $G$ which noisily generates in the limit for $\SC$.
    We will inductively construct an enumeration $\{x_i\}_{i \in \NN}$ of a language $K \in \SC_2$ such that there
    exists an infinite sequence of times $t_0 < t_1 < \dots$ where the string output by $G$ at each time $t_i$
    is not in $K$.
    This would imply that $G$ does not generate in the limit for $K$.

    Let $L_0 = \NN$ and consider the adversary enumeration
    $\seqdef{x}{0}$ where $\seq{x}{0}{i} = i$.
    Since $G$ generates in the limit, there must exist a time $t_0$
    where the algorithm outputs a string $z_{t_0} \in \NN \setminus \{0, \dots, t_0\}$.
    For $i \le t_0 + 1$, we set
    \[x_i =
    \begin{cases}
        i & i \le t_0 \\
        -1 & i = 1 + t_0
    \end{cases}.\]

    We now proceed iteratively in stages, where during each stage $j$,
    we extend the (partial) enumeration $x$ and introduce a new time $t_j$ for which the algorithm's
    output $z_{t_j}$ is incorrect.
    At iteration $j$, let $t_0 < \dots < t_j$ be the current increasing sequence of times
    and let $x_0$, \dots, $x_{1 + t_j}$ be the current partial enumeration.
    Recall that we write $S_t$ to denote the enumerated strings $\{x_0, \dots, x_t\}$.
    Inductively assume that $\{0, -1, \dots, -(j+1)\} \in S_{1 + t_j}$,
    and that for each of the times $t_i$, the algorithm's output $z_{t_i}$ is a \emph{positive}
    integer which is not contained in $S_{1 + t_j}$.

    Let $m_t = \max\{x_0, z_0, \dots, x_t, z_t\}$ be the maximum integer
    output by either the adversary or the algorithm up to time $t$.
    Now consider the language $L_{j+1} = S_{1 + t_j} \cup \{1 + m_{1 + t_j}, 2 + m_{1 + t_j}, \dots\}$.
    It may not be true that $L_{j+1} \subseteq \SC_1$.
    However, $\{1 + m_{1 + t_j}, 2 + m_{1 + t_j}, \dots\} = L_{j + 1} \setminus S_{1 + t_j}$ is in $\SC_1$.
    Thus consider the natural enumeration $\seqdef{x}{j+1}$ of $L_{j+1}$ where
    \[\seq{x}{j+1}{i} = \begin{cases}
                            x_i & i \le 1 + t_j \\
                            m_{1 + t_j} + i - 1 - t_j & i \ge 2 + t_j
    \end{cases}.\]
    Since $\abs{S_{1 + t_j}} = 2 + t_j$, an enumeration of $L_{j+1}$
    is an enumeration of $\{1 + m_{1 + t_j}, 2 + m_{1 + t_j}, \dots\}$ with noise level $2 + t_j < \infty$.
    Since $G$ noisily generates in the limit, there must exist a time $t_{j+1} > 1 + t_j$
    where the output $z_{t_{j+1}}$ is in
    $\{1 + m_{1 + t_j}, 2 + m_{1 + t_j}, \dots\} \setminus \{\seq{x}{j+1}{0}, \dots, \seq{x}{j+1}{t_{j+1}}\}$.
    Thus we extend the enumeration $x$ for $i \in [2 + t_j, 1 + t_{j+1}]$ by setting
    \[x_i =
    \begin{cases}
        \seq{x}{j+1}{i} & 2 + t_j \le i \le t_{j+1} \\
        -(j+2) & i = 1 + t_{j+1}
    \end{cases}.\]
    It remains to check that the inductive hypotheses are satisfied.
    First, since we do not change the sequence $x_0$, \dots, $x_{1 + t_{j}}$,
    the algorithm's output up until time $1 + t_j$ remains the same.
    In particular, the values of $z_{t_0}$, \dots, $z_{t_j}$ remain unchanged.
    Furthermore, since the additional enumerated positive strings $x_{2 + t_j}$, \dots, $x_{t_{j+1}}$
    are all greater than $m_{1 + t_j}$, it remains true that each of the outputs $z_{t_i}$
    are not contained in $S_{1 + t_{j+1}}$.
    It is also clear that $\{0, -1, \dots, -(j + 2)\} \in S_{1 + t_{j+1}}$.
    Finally, the new algorithm output $z_{t_{j + 1}}$ is both positive and not contained in
    $S_{1 + t_{j+1}}$.

    To conclude, consider the language $K = \bigcup_{i \in \NN} \{x_i\}$, where the sequence $x$
    is constructed from the infinite iterative procedure described above.
    By construction, $\ZZ_{< 0} \subseteq K$, so $K \in \SC_2$.
    Furthermore, there exists an infinite sequence of times $t_0 < t_1 < \dots$
    where the algorithm's output $z_{t_i}$ at each such time is not contained in $K$.
    Thus $G$ does not noisily generate in the limit.
\end{proof}

    \section{Identification and Generation with Feedback.}\label{sec:feedback}In this section, we consider language generation and identification in the presence of feedback.

\subsection{Generation in the Limit with Feedback.}
We largely follow the model of generation in the limit with feedback defined in~\cite{CP25},
which we summarize here.
As in original mode of generation in the limit, the adversary selects a target language $K \in \SC$.
Then at each time step $t$, the adversary outputs a string $x_t \in K$.
The algorithm is now allowed to query a string $y_t$,
and receives a response $a_t \in \{\text{Yes}, \text{No}\}$ corresponding
to whether $y_t$ is in the target language $K$.
Finally, the algorithm outputs a string $z_t$.
As before, we wish for there to be some time $\tst$ after which all
the generated strings $z_t$ are correct, i.e., they are in the target language $K$.

Charikar and Pabbaraju~\cite{CP25} formally define the feedback model
as a game between adversary and generator strategies.
However, we define an equivalent model in terms of enumerations and generator algorithms
which is closer to the standard models of generation.

\begin{definition}[Generator algorithm with feedback]
    A generator algorithm with feedback is a function $G$ that takes as input an alternating sequence
    of strings $x_t$ and responses $a_t$, and generates either a query string $y_t$
    or an output string $z_t$.
    If the input sequence ends with an enumerated string $x_t$,
    then the generator output $G(x_0, a_0, \dots, a_{t-1}, x_t)$
    represents the query string $y_t$.
    Otherwise, the generator output $G(x_0, a_0, \dots, x_t, a_t)$
    represents the output string $z_t$.
\end{definition}

Note that we do not need to include the algorithm's previous queries $y_t$
in the input since the algorithm can reconstruct all of its previous outputs and queries.

\begin{definition}[Generation in the limit with feedback]
    A generator algorithm with feedback $G$ generates in the limit with feedback for a collection $\SC$
    if for any $K \in \SC$ and any enumeration $x$ of $K$,
    there exists a time step $\tst$ where for all $t \ge \tst$,
    the generated string $z_t$ at time $t$ is in $K \setminus S_t$.
\end{definition}

We now prove the first part of \Cref{thm:feedback_combined}, that infinite feedback
is strictly more powerful than generation in the limit.
We do this by showing that the countable union of uniformly generatable collections
is generatable in the limit with feedback.
Since the countable union of uniformly generatable collections is not necessarily
generatable in the limit~\cite{LRT25}(Lemma~$4.3$), there exist
collections which are generatable in the limit with feedback,
but are not generatable in the limit without feedback.

\begin{algorithm2e}[!ht]
    \SetKwInput{Input}{Input}
    \Input{A countable sequence of uniformly generatable collections $\SC_0$, $\SC_1$, \dots}

    $S = \emptyset$ \\
    $t = 0$ \\
    $v = 0$ \\
    \For{$i = 0, 1, 2, \dots$}{\label{line:feed_for}
        $c_i = \text{closure dimension of $\SC_i$}$ \\
        \While{$\abs{S} \le c_i$}{\label{line:feed_while1}
        Adversary reveals $x_t$ \\
            $S = S \cup \{x_t\}$ \\
            Algorithm queries $y_t = v$ \\
            Adversary responds $a_t$ \\
            Output $z_t = v$ \\
            $t = t + 1$ \\
        }
        $A_i = \{L \in \SC_i \mid S \subseteq L\}$ \\
        \If{$A_i \neq \emptyset$}{\label{line:feed_if}
            $B_i = \bigcap_{L \in A_i} L$ \\
            \While{true}{\label{line:feed_while_true}
            Adversary reveals $x_t$ \\
                $S = S \cup \{x_t\}$ \\
                $v = \min\{j \ge v \mid j \in B_i \setminus S\}$ \\
                \label{line:feed_min}
                Algorithm queries $y_t = v$ \label{line:feed_que_v}\\
                Adversary responds $a_t$ \\
                Output $z_t = v$ \\
                \If{$a_t = \text{No}$}{
                    $t = t + 1$ \\
                    \textbf{break}
                }
                $t = t + 1$ \\
            }
        }
    }
    \caption{Generator in the limit with feedback}
    \label{alg:feedback}
\end{algorithm2e}


\begin{theorem}
    \label{thm:feedback_union}
    A collection $\SC$ is generatable in the limit with feedback if there exists a countable set of classes
    $\SC_0$, $\SC_1$, \dots\ such that $\SC = \bigcup_{i \in \NN} \SC_i$ and each $\SC_i$ is uniformly generatable.
\end{theorem}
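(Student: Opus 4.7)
The plan is to design an algorithm $G$ that uses its single per-round membership query to verify candidates produced by the uniform generators $G_0, G_1, \ldots$ for the collections $\SC_0, \SC_1, \ldots$, and to schedule them through a FIFO queue so that no index is ever permanently discarded. Specifically, I would maintain a FIFO queue $L$ (initially empty); at each time step $t$, append $t$ to the back of $L$, let $c$ be the front of $L$, compute $z_t = G_c(x_0, \ldots, x_t)$, and query whether $z_t \in K$. If the answer is ``Yes'' and $z_t \notin S_t$, output $z_t$ and leave $L$ unchanged; otherwise output an arbitrary already-seen string and move $c$ from the front to the back of $L$. The design philosophy is that failing indices are merely deprioritized, never removed, so the algorithm cannot prematurely rule out the ``right'' index before its (unknown) uniform time has elapsed.

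For correctness, fix any $K \in \SC$ and let $i^*$ be an index with $K \in \SC_{i^*}$, which has a finite uniform time $\tst_{i^*}$ by hypothesis on $\SC_{i^*}$. Assume toward contradiction that the algorithm fails to generate in the limit. Then the front of $L$ must change infinitely often, for otherwise the front stabilizes at some $c^*$ whose candidates are certified by ``Yes'' replies to lie in $K \setminus S_t$ from the stabilization time onward, which already yields generation in the limit. A routine queue invariant then shows that each front-change strictly decreases the position of every non-front element by one, while new additions join only the back, so every fixed index---in particular $i^*$---reaches the front at arbitrarily large times.

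Pick such a time $t \ge \tst_{i^*}$ at which $i^*$ is at the front. Uniform generation of $\SC_{i^*}$ by $G_{i^*}$ gives $z_t = G_{i^*}(x_0, \ldots, x_t) \in K \setminus S_t$, so the feedback query returns ``Yes'' and $i^*$ is left at the front. The identical reasoning applies at every subsequent time step (since $t+1, t+2, \ldots$ are all $\ge \tst_{i^*}$), so $i^*$ remains at the front forever, contradicting that the front changes infinitely often. Hence the front stabilizes at some $c^*$ at some finite time $T$, and for all $t \ge T$ the algorithm outputs a verified unseen string of $K$, establishing that $G$ generates in the limit with feedback.

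The main obstacle is keeping the scheduling sufficiently patient: a naive ``advance on first failure'' policy would risk crossing $i^*$ before $\tst_{i^*}$ has elapsed, after which $i^*$ would never be revisited. The FIFO queue circumvents this by guaranteeing $i^*$ infinitely many front-turns, one of which must land at a time $\ge \tst_{i^*}$; once it does, uniform generation and feedback verification lock the algorithm in on $i^*$ forever. Everything else is a direct consequence of the uniform-generation hypothesis applied to the single generator $G_{i^*}$.
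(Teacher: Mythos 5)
Your proof is correct, and it takes a genuinely different route from the paper's. The paper's algorithm scans the collections $\SC_0, \SC_1, \ldots$ \emph{once} in increasing order: for each index $i$ it waits until $\abs{S}$ exceeds the closure dimension of $\SC_i$ (invoking the structural characterization of uniform generatability from~\cite{LRT25}), then outputs elements of the closure $B_i = \bigcap_{L \in \SC_i, S \subseteq L} L$, using a query failure as the signal to advance permanently to $\SC_{i+1}$. The closure-dimension wait is precisely what prevents the paper's algorithm from passing the correct index $j$ prematurely: once $\abs{S}$ exceeds $c_j$, the set $B_j$ is an infinite subset of $K$, so all queries succeed and the algorithm locks in. Your algorithm instead treats the uniform generators $G_i$ as black boxes and never discards an index: the FIFO queue guarantees $i^*$ arbitrarily many front-turns, one of which necessarily falls at a time $t \ge \tst_{i^*}$, after which $G_{i^*}$'s uniform guarantee plus feedback verification lock the front at $i^*$. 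In short, the paper buys correctness by being careful about \emph{when} to commit to each index (via closure dimension), while you buy it by being \emph{patient}---never permanently committing at all, so a premature pass is harmless. Your approach avoids the closure-dimension machinery entirely and uses only the existence of the uniform generators, which is conceptually a bit cleaner; the paper's approach has the minor virtue that it never revisits an index and so produces a monotone scan, but both yield the same conclusion. One cosmetic note: your case analysis is naturally framed as a dichotomy (front stabilizes versus changes infinitely often) and does not actually need the contradiction wrapper, but the argument is sound as written.
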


\begin{proof}
    Let $\SC$ be an arbitrary collection and $\SC_0$, $\SC_1$, \dots\ be a countable sequence of
    collections such that each $\SC_i$ is uniformly generatable and $\SC = \bigcup_{i \in \NN} \SC_i$.
    Assume without loss of generality that $\SC$ is a collection over $\NN$.
    We claim that \Cref{alg:feedback} when run on $\SC_0$, $\SC_1$, \dots\ generates
    uniformly with feedback for $\SC$.
    Intuitively, the algorithm iterates through the set of collections
    and tries to determine if each $\SC_i$ contains the target language $K$.
    For each $\SC_i$, we first make enough feedback queries so that the
    set of positive examples $S$ is at least the closure dimension of $\SC_i$.
    Then, if there are languages consistent with $S$ in $\SC_i$,
    we infinitely output unseen strings from the closure of $S$,
    while making feedback queries from the same set.
    If $K$ is indeed contained in $\SC_i$, we will generate from the closure of $S$ in $\SC_i$ forever.
    Otherwise, the feedback queries will eventually let us know that $K \notin \SC_i$
    and the algorithms moves on to $\SC_{i+1}$.

    Formally, fix an arbitrary $K \in \SC$ and an arbitrary enumeration $x$ of $K$.
    Now consider the beginning of an arbitrary iteration of the for loop on line~\ref{line:feed_for}.
    Since the closure dimension of any uniformly generatable collection is finite~\cite{LRT25},
    the while loop on line~\ref{line:feed_while1} must terminate.
    Now recall that the closure dimension $c_i$ satisfies the property that
    for every $S \subseteq \NN$ with $\abs{S} > c_i$,
    either no languages in $\SC_i$ are consistent with $\abs{S}$,
    or the closure of $S$ in $\SC_i$ is infinite.
    This implies that if we move into the body of the if statement on line~\ref{line:feed_if},
    we must have that $\abs{B_i} = \infty$.

    Now assume that we are in the body of the if statement on line~\ref{line:feed_if}
    and let $k$ be the current value of the variable $v$.
    We claim that if for all elements $n \ge k$, we have that $n \in B_i$ implies $n \in K$,
    then the while loop on line~\ref{line:feed_while_true} will run forever,
    and every output $z_t$ will be in $K \setminus S_t$.
    Firstly, the set on line~\ref{line:feed_min} is nonempty since $\abs{B_i \setminus S} = \infty$,
    and all but a finite number of elements of $\NN$ are larger than $v$.
    Thus on line~\ref{line:feed_que_v}, we have $v \in B_i$.
    Assuming that $n \in B_i$ implies $n \in K$ for all $n \ge k$,
    this would imply that $v \in B_i \setminus S$
    and the output $z_t$ is a correct unseen string as desired.
    In the opposite case where there is some $n \ge k$ such that $n \in B_i$ does not imply $n \in K$,
    the while loop on line~\ref{line:feed_while_true} must eventually break.
    This is because there must be some element $b \ge k$ in $B_i \setminus K$.
    Eventually, the variable $v$ will take on the value $b$, and the query $a_t$
    will return ``No'', causing the while loop to break.

    We have shown that for any iteration of the for loop, if there exists some $n \ge k$
    such that $n \in B_i \setminus K$, then the while loop on line~\ref{line:feed_while_true}
    must break and the for loop will eventually advance to the next iteration.
    In the opposite case, the while loop will iterate forever, and output correct unseen strings
    at each iteration.
    Thus it remains to show that at some iteration $i$ of the for loop,
    it is true that $n \in B_i$ implies $n \in K$ for all $n \ge k$.
    Since $\SC = \bigcup_{i \in \NN} \SC_i$, there must be some index $j$ such that $K \in \SC_j$.
    At such an iteration $j$, we must have $A_i \neq \emptyset$ and $B_i \subseteq K$.
    Thus $n \in B_i$ implies $n \in K$ for all $n \ge k$ as desired,
    and the algorithm must generate in the limit.
\end{proof}

%

Since non-uniformly generatable collections can be written as the countable union of uniformly
generatable collections, we have the following corollary.

\begin{corollary}
    A collection $\SC$ is generatable in the limit with feedback if there exists a countable set of classes
    $\SC_1$, $\SC_2$, \dots\ such that $\SC = \bigcup_{i \in \NN} \SC_i$ and each $\SC_i$ is non-uniformly generatable.
\end{corollary}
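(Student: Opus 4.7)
The plan is to derive this corollary directly from \Cref{thm:feedback_union} by showing that any countable union of non-uniformly generatable collections can be rewritten as a countable union of uniformly generatable collections.

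First, I would take each non-uniformly generatable $\SC_i$ and show it decomposes as a countable union of uniformly generatable pieces. Fix a non-uniform generator $G_i$ for $\SC_i$. By definition, for each $L \in \SC_i$ there is a finite time $\tst(\SC_i, L)$ after which $G_i$ generates correctly on every enumeration of $L$. Define $\SC_{i,j} = \{L \in \SC_i \mid \tst(\SC_i, L) \le j\}$ for each $j \in \NN$. The same algorithm $G_i$ uniformly generates $\SC_{i,j}$ with uniform time $\tst = j$, since by construction, for every $L \in \SC_{i,j}$, every enumeration $x$ of $L$, and every $t \ge j$, we have $G_i$'s output in $L \setminus S_t$. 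Hence each $\SC_{i,j}$ is uniformly generatable, and clearly $\SC_i = \bigcup_{j \in \NN} \SC_{i,j}$.

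Next, I would combine these decompositions. Writing $\SC = \bigcup_{i \in \NN} \SC_i = \bigcup_{(i,j) \in \NN \times \NN} \SC_{i,j}$ expresses $\SC$ as a countable union of uniformly generatable collections, since $\NN \times \NN$ is countable. Applying \Cref{thm:feedback_union} to this reindexed countable sequence then yields that $\SC$ is generatable in the limit with feedback.

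I do not expect any substantive obstacle: the only thing to verify carefully is that the same non-uniform generator $G_i$ witnesses uniform generation for each $\SC_{i,j}$ with time bound $j$, which is immediate from unrolling the definitions of non-uniform and uniform generation. The result is essentially a bookkeeping consequence of \Cref{thm:feedback_union} together with the standard fact that non-uniform generatability is a countable union of uniform generatability instances.
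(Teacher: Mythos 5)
Your proposal is correct and follows exactly the paper's approach: decompose each non-uniformly generatable $\SC_i$ into a countable union of uniformly generatable collections, reindex over $\NN \times \NN$, and invoke \Cref{thm:feedback_union}. The only difference is that you prove the decomposition directly (via the sub-collections $\SC_{i,j}$ of languages with non-uniform generation time at most $j$), whereas the paper cites this fact as Theorem~3.5 of~\cite{LRT25}; your inline argument is a correct proof of that cited result.
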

\begin{proof}
    \cite{LRT25} (Theorem~$3.5$) showed that every non-uniformly generatable collection can be written
    as the countable union of uniformly generatable collections.
    Since the countable union of countable sets is still countable, \cref{thm:feedback_union} implies the corollary.
\end{proof}

\subsubsection{Generation in the Limit with Finite Feedback.}

Similar to the noisy and lossy models of generation which gave the adversary finite power,
it is natural to ask whether allowing a finite number of queries increases the power of a language generation algorithm.
This new model is similar to the previous model of generation with feedback.
However, each query $y_t$ can now either be a string, or the $\bot$ symbol,
which signifies that the algorithm is not querying a string at time $t$.
If $y_t = \bot$, then the response $a_t$ is also $\bot$.
Otherwise, we have $a_t \in \{\text{Yes}, \text{No}\}$ as before.
During the course of the execution, there may be at most a finite number of times $t$ where $y_t \neq \bot$.

\begin{definition}[Generator algorithm with $i$ queries]
    A generator algorithm with $i$ queries is a function $G$ that takes as input an alternating sequence
    of strings $x_t$ and responses $a_t$, and generates either a query string $y_t$
    or an output string $z_t$.
    If the input sequence ends with an enumerated string $x_t$,
    then the generator output $G(x_0, a_0, \dots, a_{t-1}, x_t)$
    represents the query string $y_t$.
    Otherwise, the generator output $G(x_0, a_0, \dots, x_t, a_t)$
    represents the output string $z_t$.
    Furthermore, for any infinite sequence $x_0$, $a_0$, $x_1$, \dots,
    there may be at most $i$ values of $t$ such that $G(x_0, a_0, \dots, a_{t-1}, x_t) \neq \bot$.
\end{definition}

\begin{definition}[Generation in the limit with $i$ queries]
    A generator algorithm $G$ with $i$ queries generates in the limit with $i$ queries
    for a collection $\SC$ if for any $K \in \SC$ and any enumeration $x$ of $K$,
    there exists a time step $\tst$ where for all $t \ge \tst$,
    the generated string $z_t$ at time $t$ is in $K \setminus S_t$.
\end{definition}

We now prove the second part of \Cref{thm:feedback_combined}, that having a finite number of queries does not give an
algorithm additional power for generation in the limit.

\begin{theorem}
    \label{thm:feedback_finite}
    For any collection $\SC$ and $i \in \NN$, if $\SC$ can be generated in the limit
    with $i$ queries, then $\SC$ can be generated in the limit without queries.
\end{theorem}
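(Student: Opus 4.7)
The plan is to simulate $G$ in the most natural \emph{optimistic} way. Given an algorithm $G$ with at most $i$ queries that generates in the limit for $\SC$, define $G'$ as follows: at time $t$, upon seeing $x_{0:t}$, $G'$ replays a simulation of $G$ step by step on the prefix $x_0, x_1, \dots, x_t$, and whenever $G$ issues a non-$\bot$ query for some string $y$ at step $k \le t$, $G'$ responds
\[
a_k^{\text{sim}} \;=\;
\begin{cases}
\text{Yes} & \text{if } y \in S_t = \{x_0, \dots, x_t\}, \\
\text{No} & \text{otherwise.}
\end{cases}
\]
$G'$ then outputs whatever the simulated $G$ outputs at step $t$. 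Crucially, note that $G'$ issues no queries of its own.

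The argument that $G'$ generates in the limit has two ingredients. Fix $K \in \SC$ and an enumeration $x$ of $K$, and let $Y$ denote the set of strings that $G$ queries in its true run (on this $K$ and $x$) whose true answer is ``Yes''; by hypothesis $|Y| \le i$. Since each $y \in Y$ lies in $K$ and is therefore enumerated at some finite time $t_y$, the quantity $\tau := \max_{y \in Y} t_y$ (or $0$ if $Y = \emptyset$) is finite. The key claim is that for every $t \ge \tau$, the simulated transcript of $G$ on $x_{0:t}$ coincides step-for-step with the true transcript of $G$ on $x_{0:t}$. I would prove this by induction on $k \le t$: assuming the first $k-1$ simulated and true responses agree, the state of $G$ is identical in the two runs at step $k$, so $G$ issues the same (possibly $\bot$) query $y_k$. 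For the responses, $S_t \subseteq K$ gives $y_k \in S_t \Rightarrow y_k \in K$, and conversely if $y_k \in K$ then $y_k \in Y$ by definition, so $t_{y_k} \le \tau \le t$ and $y_k \in S_t$. Hence $a_k^{\text{sim}} = a_k^{\text{true}}$.

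Finally, since $G$ generates in the limit, there exists a finite $\tau'$ (depending on $K, x$) such that the true output of $G$ at every time $t \ge \tau'$ lies in $K \setminus S_t$. Combined with the previous claim, this yields that for all $t \ge \max(\tau, \tau')$, the output of $G'$ equals the true output of $G$ and therefore lies in $K \setminus S_t$. Thus $G'$ generates in the limit for $\SC$.

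The one delicate point — and the main obstacle to just waving the argument — is that optimistic responses are only correct eventually, and a wrong ``No'' to a true-Yes query early in the simulation might cascade into $G$ issuing an entirely different query next, so the simulated trajectory could diverge from the true trajectory. The finite bound $i$ on the number of queries is exactly what makes this harmless: only finitely many true-Yes queries ever occur, all are eventually enumerated, and the inductive step then shows that from time $\tau$ onward the simulation and the truth never diverge on the window $[0,t]$, so the entire simulated run after $\tau$ is faithful.
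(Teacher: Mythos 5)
Your proof is correct and constructs the same simulator $G'$ that the paper uses (\Cref{alg:query}: replay $G$ on the prefix, answering each query $y$ by whether $y \in S_t$), but the correctness argument is genuinely different and, in my view, more direct. The paper tracks, for each $t$, the vector of answers the simulator gives, maps it to a node $v_t$ in a full binary tree of depth $i$, and argues that $v_t$ is weakly increasing in preorder-traversal time: the only way an answer can flip is from No to Yes (a queried element getting enumerated), the ancestor case is ruled out separately, and the descendant case is trivially forward. Since the tree is finite, $v_t$ stabilizes at some $c$, and the paper then argues the stabilized answers must all be correct. Your argument shortcuts this entirely: you look directly at the \emph{true} run of $G$ against $K$ and $x$, observe that the set $Y$ of true-Yes-queried strings has size at most $i$, define $\tau$ as the time by which all of $Y$ is enumerated, and prove by a clean induction on $k\le t$ that for every $t \ge \tau$ the optimistic simulation on $x_{0:t}$ reproduces the true transcript step for step (using $S_t \subseteq K$ in one direction and $Y \subseteq S_\tau \subseteq S_t$ in the other). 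This avoids the decision-tree bookkeeping and the monotonicity-in-preorder argument altogether; what you give up is only that the paper's argument makes the finite, monotone ``flip structure'' of the answer vectors explicit, which some readers may find illuminating. Both proofs also implicitly rely on the same final inductive matching between simulated and true transcripts; you state it cleanly, while the paper leaves it somewhat implicit after establishing stabilization.
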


\begin{algorithm2e}
    \SetKwInput{Input}{Input}
    \Input{A generator algorithm $G$ with $i$ queries}

    $S = \emptyset$ \\
    \For{$t = 0, 1, 2, \dots$}{
        \label{line:query_for}
        Adversary reveals $x_t$ \\
        $S = S \cup \{x_t\}$ \\
        \For{$j = 0, \dots, t$}{
            $\seq{y}{t}{j} = G(x_0, \seq{a}{t}{0}, \dots, \seq{a}{t}{j-1}, x_j)$ \\
            \If{$\seq{y}{t}{j} = \bot$}{
                $\seq{a}{t}{j} = \bot$
            }
            \ElseIf{$\seq{y}{t}{j} \in S$}{
                $\seq{a}{t}{j} = \text{Yes}$
            }
            \Else{
                $\seq{a}{t}{j} = \text{No}$
            }
        }
        $z_t = G(x_0, \seq{a}{t}{0}, \dots, x_t, \seq{a}{t}{t})$ \\
        output $z_t$
    }
    \caption{Simulating an Algorithm with Finite Queries}
    \label{alg:query}
\end{algorithm2e}

\begin{proof}
    For any collection $\SC$ and $i \in \NN$, let $G$ be an algorithm which generates in the limit with $i$ queries.
    We claim that running \cref{alg:query} with input $G$
    produces an algorithm that generates in the limit for $\SC$ without using any queries.
    Intuitively, at each iteration of the for loop on line~\ref{line:query_for},
    we restart the simulation and answer each of the queries $\seq{y}{t}{j}$ by
    evaluating whether the query is in the current set of enumerated strings $S_t$.
    Eventually, every string in the target language must appear in $S_t$,
    so we will eventually correctly answer the queries and thus generate correctly.

    More formally, let $K \in \SC$ be an arbitrary language and $x$ be an arbitrary enumeration of $K$.
    We introduce the notion of a decision tree, which encodes the results of the queries
    during an iteration.
    For a given iteration of $t$ in the loop on line~\ref{line:query_for},
    let $Q_t = \{j \mid \seq{y}{t}{j} \neq \bot\}$ be the times at which the algorithm asked a query.
    Also let $d_t = \abs{Q_t}$ denote the total number of queries asked and $\seq{q}{t}{0}$, \dots, $\seq{q}{t}{d_t-1}$
    be the sorted list of times when the queries were asked.
    Finally, the sequence $\seq{s}{t}{0}$, \dots, $\seq{s}{t}{d_t-1}$ where $\seq{s}{t}{j} = \seq{y}{t}{\seq{q}{t}{j}}$
    represents the queries in order,
    and the sequence $\seq{r}{t}{0}$, \dots, $\seq{r}{t}{d_t-1}$ where $\seq{r}{t}{j} = \seq{a}{t}{\seq{q}{t}{j}}$
    represents the ordered responses to those queries.
    Note that $\seq{r}{t}{j} \in \{\text{Yes}, \text{No}\}$ for each $\seq{r}{t}{j}$.

    Now imagine a full binary tree of depth $i$, where for each non-leaf node,
    the left edge is labeled ``Yes'', and the right edge is labeled ``No''.
    We map a sequence $\seq{r}{t}{0}$, \dots, $\seq{r}{t}{d_t-1}$ to a node in the binary tree
    by starting at the root and then following the edge labeled by $\seq{r}{t}{j}$ when at depth $j$.
    Let $v_t$ be node mapped to by $\seq{r}{t}{0}$, \dots, $\seq{r}{t}{d_t-1}$.
    Now consider the preorder traversal of the binary tree
    where the traversal time of each node is the time at which it is first visited in a DFS.
    We claim that if $v_t \neq v_{t+1}$, then
    $v_t$ must appear before $v_{t+1}$ in the preorder traversal of the tree.
    First, if $d_{t+1} \ge d_t$ and $\seq{r}{t}{j} = \seq{r}{t+1}{j}$ for each $j < d_t$,
    then $v_{t+1}$ is a descendant of $v_{t}$, so the condition is satisfied.

    Next, we claim that it is not possible that $d_{t+1} < d_t$
    and $\seq{r}{t}{j} = \seq{r}{t+1}{j}$ for each $j < d_{t+1}$, i.e.,
    that $v_{t+1}$ is an ancestor of $v_t$.
    Note that in this case, the values of $\seq{a}{t}{i}$ and $\seq{a}{t+1}{i}$
    must be identical up to time $\seq{q}{t}{d_{t+1}}$.
    If $d_{t+1} < d_t$, then it must be that the query at time $\seq{q}{t}{d_{t+1}}$
    was not $\perp$ during iteration $t$, but is now $\perp$ during iteration $t+1$.
    However, the query at time $\seq{q}{t}{d_{t+1}}$ is solely a function of
    the enumerated strings $x$ and responses $a$ up until that time.
    Since the strings and responses up until time $\seq{q}{t}{d_{t+1}}$
    have not changed between iteration $t$ and $t+1$,
    the query at time $\seq{q}{t}{d_{t+1}}$ must also be the same between the two iterations.
    Thus, it cannot be that $d_{t+1} < d_t$
    and $\seq{r}{t}{j} = \seq{r}{t+1}{j}$ for each $j < d_{t+1}$.

    In the remaining case, there must be some first index $k < d_t$
    such that $\seq{r}{t}{k} \neq \seq{r}{t+1}{k}$.
    Since $k$ is the first index at which the responses differ, the outputs and queries up until time
    $\seq{q}{t}{k}$ must be the same between iterations $t$ and $t+1$.
    In particular, the queries $\seq{s}{t}{k}$ and $\seq{s}{t+1}{k}$ at time $\seq{q}{t}{k}$ must be equal.
    Thus, the only way for $\seq{r}{t}{k}$ to be different from $\seq{r}{t+1}{k}$
    is if the queried element $\seq{s}{t}{k}$ was newly added to $S$ during iteration $t+1$.
    In such a case, we must have that $\seq{r}{t}{k} = \text{No}$ and $\seq{r}{t+1}{k} = \text{Yes}$.
    Since ``Yes'' corresponds to the right edge, and the entire left subtree of a node
    appears earlier in the preorder traversal than the right subtree, we have that $v_t$
    appears before $v_{t+1}$ as desired.

    Since the binary tree has a finite number of nodes, and the preorder time of nodes
    weakly increases each iteration, there are only a finite number of iterations $t$
    where $v_{t} \neq v_{t+1}$.
    Thus, there must be some iteration $c < \infty$ at which point
    $v_t = v_c$ for all $t \ge c$.
    This also implies that both the sequence of queries $\seq{s}{t}{0}$, \dots, $\seq{s}{t}{d_t-1}$
    and the sequence of answers $\seq{r}{t}{0}$, \dots, $\seq{r}{t}{d_t-1}$
    are identical for every $t \ge c$.
    Note that every string $x \in K$ must eventually appear in the enumeration $x$.
    In particular, for every query string $y$, if in fact $y \in K$,
    then there must be some future iteration $t$ at which $y \in S$.
    Thus, if the queries and answers remain unchanged after iteration $c$,
    the answers to the queries must have all been correct at iteration $c$.

    To conclude, let $t_1$ be the time at which $G$ generates in the limit with $i$ queries
    for the language $K$ and enumeration $x$.
    We claim that after time $\tst = \max(c, t_1)$, the output of \Cref{alg:sim_without}
    must be correct unseen strings from $K$.
    First, for all times $t \ge \tst$, we showed that every query
    is answered correctly according to the actual target language $K$.
    Thus for all $t \ge \tst$, the output of \Cref{alg:sim_without} must coincide
    with the output of the algorithm $G$.
    Since $G$ generates correctly for all $t \ge \tst$, this proves the claim.
\end{proof}

\subsection{Non-Uniform Identification with Feedback.}

\begin{algorithm2e}
    \SetKwInput{Input}{Input}
    \Input{A countable collection $\SC = \{L_0, L_1, \dots\}$}

    $P = \emptyset$ \\
    $N = \emptyset$ \\
    \For{$t = 0, 1, 2, \dots$}{
        Adversary reveals $x_t$ \\
        $P = P \cup \{x_t\}$ \\
        Algorithm queries $t$ \\
        Adversary responds $a_t$ \\
        \If{$a_t = \text{Yes}$}{
            $P = P \cup \{t\}$ \\
        }\Else{
            $N = N \cup \{t\}$
        }
        $z_t = 0$ \\
        \For{$i = 0, \dots, t$}{
            \label{line:ident_for_i}
            \If{$P \subseteq L_i$ and $N \cap L_i = \emptyset$}{
                \label{line:ident_if}
                $z_t = i$ \\
                \textbf{break}
            }
        }
        output $z_t$
    }
    \caption{Non-uniform identifier with feedback}
    \label{alg:identify}
\end{algorithm2e}

Finally, we consider a model of identification in the limit with feedback for \emph{countable} collections
which was not discussed in the introduction.
In this model, we assume that the languages in the collection are explicitly indexed
so that $\SC = \{L_0, L_1, \dots\}$.
As before, the adversary selects an arbitrary language $K \in \SC$ and an arbitrary
enumeration of the language.
Once again, the algorithm can query at each time step $t$ whether a string $y_t$ is in $K$.
However, the key difference is that rather outputting a string,
the output $z_t$ is an integer, which represents a guess that the target language is equal to $L_{z_t}$.

\begin{definition}[Identifier algorithm with feedback]
    An identifier algorithm with feedback is a function $G$ that takes as input an alternating sequence
    of strings $x_t$ and responses $a_t$, and generates either a query string $y_t$
    or an output index $z_t$.
    If the input sequence ends with an enumerated string $x_t$,
    then the generator output $G(x_0, a_0, \dots, a_{t-1}, x_t)$
    represents the query string $y_t$.
    Otherwise, the generator output $G(x_0, a_0, \dots, x_t, a_t)$
    represents the output index $z_t$.
\end{definition}

\begin{definition}[Non-uniform identification with feedback]
    An identifier algorithm with feedback $G$ non-uniformly identifies with feedback for a collection $\SC$
    if for any $K \in \SC$, there exists a time step $\tst$
    such that for any enumeration $x$ of $K$, we have $K = L_{z_t}$ for all $t \ge \tst$.
\end{definition}

We give an algorithm that non-uniformly identifies with feedback for every countable collection.

\identfeed*
\begin{proof}
    Let $\SC = \{L_0, L_1, \dots\}$ be a countable collection
    and assume without loss of generality that $\SC$ is a collection over $\NN$.
    We claim that \Cref{alg:identify} on input $\SC$ non-uniformly identifies with feedback.
    Fix an arbitrary $K \in \SC$ and let $k$ be the first index such that $L_k = K$.
    For every $j < k$, there must be a smallest integer $t_j$ such that $t_j$
    appears in exactly one of $L_j$ or $L_k$.
    We claim that after time $\tst = \max(k, t_0, \dots, t_{k-1})$,
    \Cref{alg:identify} will correctly output the index $k$.

    Since $\tst \ge k$, the language $L_k$ is under consideration
    in the for loop on line~\ref{line:ident_for_i} at time $\tst$.
    Now note that at time $\tst$, the algorithm has queried every integer from $0$ to $\tst$.
    For every $j < k$, since $t_j$ appears in exactly one of $L_j$ and $L_k$,
    it must be that either $t_j \in P$ and $t_j \notin L_j$, or $t_j \in N$ and $t_j \in L_j$.
    Thus it must be true that either $P \setminus L_i \neq \emptyset$ or $N \cap L_i \neq \emptyset$.
    In either case, the if condition on line~\ref{line:ident_if} is false for all $j < k$.
    Clearly, the if condition is true when $i=k$.
    Thus, we correctly output the index $k$ as desired.
\end{proof}

    \section{Closing Remarks.}\label{sec:closing}Language generation in the limit is an exciting new formalism for understanding the fundamental structure and limitations of language learning. Since the work of Kleinberg and Mullainathan~\cite{KM24}, in less than a year, a wealth of research has addressed different aspects of this phenomenon, and introduced new refinements and variations to the basic model. In this paper, we answer one of the basic open questions posed in this line of research by Li, Raman, and Tewari~\cite{LRT25} about the union-closedness of language generation in the limit. We then give precise characterizations of the qualitative and quantitative roles of three important ingredients---loss (omissions), noise (errors), and feedback (membership queries)---in the language generation problem, which were studied in previous work (\cite{LRT25,RR25,CP25}). We close the paper with the belief that language generation is a phenomenon of fundamental importance that will be studied extensively in the coming years, and hope that our results will help lay the groundwork for further theoretical explorations in this domain.

    \section*{Acknowledgments.}\label{sec:ack}

    This research was supported in part by NSF awards CCF-1955703 and CCF-2329230.
    DP would also like to thank Jon Kleinberg for an interesting talk and discussion
    on language generation in the limit during a visit to Duke University, that got him interested in the topic.


    \bibliographystyle{siamplain}
    \bibliography{main_siam}

    \appendix

    \section*{Appendix}

    \section{Equivalence of Prior Models of Language Generation.}\label{sec:appendix}We now discuss the slight differences between our definitions of generation
compared to the definitions found in~\cite{KM24, LRT25}.
In previous literature, the adversary has been allowed to repeat strings in its enumeration.
In contrast, we define enumerations to be infinite sequences of unique strings.
We now show that these definitions are in fact equivalent.
For clarity, we will refer to our previous definitions as generation without repetition.
We refer to models where the adversary may repeat strings as generation with repetition.

When the adversary may repeat strings, we can no longer hope for
a fixed time step at which the algorithm generates correctly,
since the adversary can repeatedly output a single string for an arbitrarily long period of time.
Instead, we must require the algorithm to generate successfully after the adversary outputs sufficiently
many distinct strings.
Indeed, the benefit of our definitions without repetition is that we may require a fixed time
after which the algorithm must generate correctly.

\begin{definition}[Uniform generation with repetitions~\cite{LRT25}]
    An algorithm $G$ uniformly generates with repetitions for a collection $\SC$
    if there exists a $\dst$ such that for any $K \in \SC$ and any sequence $x_0$, $x_1$,
    \dots\ with $S_{\infty} = K$, the generated string $z_t$ is in $K \setminus S_t$
    for all times $t$ such that $\abs{S_t} \ge \dst$.
\end{definition}

\begin{definition}[Non-uniform generation with repetitions~\cite{LRT25}]
    An algorithm $G$ non-uniformly generates with repetitions for a collection $\SC$
    if for any $K \in \SC$, there exists a $\dst$ such that for any sequence $x_0$, $x_1$,
    \dots\ with $S_{\infty} = K$, the generated string $z_t$ is in $K \setminus S_t$
    for all times $t$ such that $\abs{S_t} \ge \dst$.
\end{definition}

The definition for generation in the limit remains the same,
except that the adversary may repeat strings.

\begin{definition}[Generation in the limit with repetitions]
    An algorithm $G$ generates in the limit with repetitions for a collection $\SC$
    if for any $K \in \SC$, and any sequence $x_0$, $x_1$, \dots\ with $S_{\infty} = K$,
    there exists a $\tst$ such that the generated string $z_t$ is in $K \setminus S_t$
    for all times $t \ge \tst$.
\end{definition}

Clearly, for all three variants, if a collection $\SC$ is generatable with repetitions,
then $\SC$ must also be generatable without repetitions.
We now show for each variant that if $\SC$ is generatable without repetitions,
then $\SC$ must also be generatable with repetitions.

\begin{algorithm2e}
    \SetKwInput{Input}{Input}
    \Input{A generator $G$}

    Adversary reveals $x_0$ \\
    $y_0 = x_0$ \\
    output $z_0 = G(y_0)$ \\
    $S = \{x_0\}$ \\
    $t = 1$ \\
    \For{$i = 1, 2, \dots$}{
        Adversary reveals $x_t$ \\
        \While{$x_t \in S$}{
            output $z_t = G(y_0, \dots, y_{i-1})$ \\
            $t = t + 1$ \\
            Adversary reveals $x_t$
        }
        $S = S \cup \{x_t\}$ \\
        $y_i = x_t$ \\
        output $z_t = G(y_0, \dots, y_i)$ \\
        $t = t + 1$ \\
    }
    \caption{Generator with repetitions}
    \label{alg:repeat}
\end{algorithm2e}

\begin{lemma}
    If a collection $\SC$ is uniformly generatable without repetition,
    then $\SC$ is uniformly generatable with repetitions.
\end{lemma}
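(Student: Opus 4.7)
The plan is to verify that Algorithm~\ref{alg:repeat}, already displayed in the excerpt, is a uniform generator with repetitions whenever $G$ is a uniform generator without repetitions. The key idea is that the algorithm extracts the strictly increasing subsequence $y_0, y_1, \ldots$ of \emph{distinct} revealed strings and only ever passes a prefix of this subsequence to $G$; any repeat from the adversary simply causes the algorithm to re-issue $G$'s output on the current distinct prefix.

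Concretely, I would proceed in three short steps. First, establish by induction on $t$ the invariant that immediately after the $t$-th output, the set of distinct strings revealed so far equals $\{y_0, \ldots, y_j\}$, where $j$ is the largest index such that $y_j$ has been defined, and moreover $z_t = G(y_0, \ldots, y_j)$. Both branches of the inner while loop preserve this invariant: on a repeat, $S$ and the $y$-sequence are unchanged and the algorithm re-issues $G(y_0, \ldots, y_{i-1})$; on a new element, both $S$ and the $y$-sequence are extended by the same string. Second, observe that because the adversary's sequence satisfies $S_\infty = K$, every element of $K$ is eventually first-seen, so $y_0, y_1, \ldots$ is a valid non-repeating enumeration of $K$; this enumeration is what $G$ actually ``sees.''

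Third, apply the uniformity hypothesis on $G$: there exists $\tst$ such that for every $K \in \SC$ and every non-repeating enumeration $y$ of $K$, $G(y_0, \ldots, y_j) \in K \setminus \{y_0, \ldots, y_j\}$ whenever $j \ge \tst$. Set $\dst = \tst + 1$. For any time $t$ with $\abs{S_t} \ge \dst$, the invariant gives $\abs{S_t} = j + 1$ so $j \ge \tst$, and $\{y_0, \ldots, y_j\} = S_t$, so $z_t = G(y_0, \ldots, y_j) \in K \setminus S_t$, as required by the definition of uniform generation with repetitions.

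There is no real obstacle here: the argument is essentially bookkeeping, and the only subtlety is making sure the index $j$ used inside $G$ is correctly identified with $\abs{S_t} - 1$ in both the ``new element'' and the ``repeat'' branches. Once this correspondence is nailed down, the uniform threshold $\dst = \tst + 1$ transfers immediately from the no-repetition setting to the repetition setting, and the same proof template will work verbatim for the non-uniform and in-the-limit variants by re-quantifying $\tst$ appropriately.
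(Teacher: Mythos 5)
Your proposal is correct and takes essentially the same approach as the paper: both run Algorithm~\ref{alg:repeat}, argue that it feeds $G$ the subsequence $y$ of first occurrences (a valid non-repeating enumeration of $K$ with $\{y_0,\dots,y_j\}=S_t$ at every step), and transfer the uniform threshold from the no-repetition setting. Your bookkeeping is in fact a touch more careful than the paper's (taking $\dst=\tst+1$ correctly accounts for the fact that ``time $t\ge\tst$'' corresponds to $|S_t|\ge\tst+1$ in the no-repetition model, whereas the paper's phrasing ``at least $\tst$ distinct strings'' has a harmless off-by-one).
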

\begin{proof}
    Let $\SC$ be an arbitrary collection and let $G$ uniformly generate without repetitions for $\SC$.
    We claim that \Cref{alg:repeat} on input $G$ is a uniform generator with repetitions for $\SC$.
    Intuitively, \Cref{alg:repeat} takes the adversary enumeration $x$
    and generates a sequence $y_0$, $y_1$, \dots\ without repetitions by keeping the first
    occurrence of each string in $x$ and discarding the subsequent occurrences.
    Then at each step, we simply generate according to the sequence $y$.

    More formally, let $\tst$ be the time at which $G$ generates correctly without repetitions.
    Fix an arbitrary $K \in \SC$ and enumeration $x$ of $K$ with repetitions.
    Let $\dst$ be the first time at which $\abs{\{x_0, \dots, x_{\dst}\}} = \tst$.
    It is easy to see that at any time $t$ during the execution of the algorithm,
    we have $S = \{x_0, \dots, x_{t}\} = \{y_0, \dots, y_i\}$.
    Thus at any time $t \ge \dst$, the sequence $y_0$, \dots, $y_i$ must consist of
    at least $\tst$ distinct strings.
    Since $G$ must generate correctly at time $\tst$, the output $z_t$ is in $K \setminus S_t$ as desired.
\end{proof}

The same proof shows the corresponding result for non-uniform generation.

\begin{lemma}
    If a collection $\SC$ is non-uniformly generatable without repetition,
    then $\SC$ is non-uniformly generatable with repetitions.
\end{lemma}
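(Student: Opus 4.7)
The plan is to simply reuse Algorithm~\ref{alg:repeat} verbatim and port the proof of the uniform case, replacing the global threshold $\tst$ with a per-language threshold $\tst(K)$. Let $G$ be a non-uniform generator for $\SC$ without repetitions, so for every $K \in \SC$ there is a time $\tst(K)$ such that for every enumeration $y_0, y_1, \ldots$ of $K$ without repetitions and every $j \ge \tst(K)$, we have $G(y_0, \ldots, y_j) \in K \setminus \{y_0, \ldots, y_j\}$. Feed the simulator of Algorithm~\ref{alg:repeat} the adversary's repetitive enumeration $x$; as in the uniform analysis, the simulator maintains the subsequence $y_0, y_1, \ldots$ of first occurrences in $x$, which is itself a valid (possibly finite prefix of an) enumeration of $K$ without repetitions.

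Next I would verify, exactly as in the uniform case, the invariant that at every time $t$ of the simulation, the current $y$-prefix $y_0, \ldots, y_i$ satisfies $\{y_0, \ldots, y_i\} = S_t$, so $i + 1 = \abs{S_t}$. Setting $\dst(K) := \tst(K)$, whenever $\abs{S_t} \ge \dst(K)$ the $y$-prefix has at least $\tst(K)$ entries, so by the defining property of $\tst(K)$ the output $G(y_0, \ldots, y_i)$ lies in $K \setminus \{y_0, \ldots, y_i\} = K \setminus S_t$, which is precisely what non-uniform generation with repetitions requires.

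There is essentially no obstacle here; the argument is a mechanical adaptation of the preceding lemma. The only point deserving care is the order of quantifiers: for the definition of non-uniform generation with repetitions, $\dst$ must depend only on $K$ and not on the particular repetitive enumeration $x$. This is automatic because $\tst(K)$ is by hypothesis independent of the (non-repetitive) enumeration of $K$, and the reduction from repetitive to non-repetitive enumerations performed by Algorithm~\ref{alg:repeat} is purely syntactic: it discards duplicates without consulting $K$, so the thresholds line up without any dependence on $x$.
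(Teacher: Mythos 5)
Your proposal is correct and follows essentially the same approach as the paper: reuse \Cref{alg:repeat}, invoke the invariant $\{y_0,\dots,y_i\} = S_t$, and port the uniform argument with the threshold now taken per language $K$. If anything, your explicit remark that $\dst(K) := \tst(K)$ is chosen independently of the repetitive enumeration $x$ is cleaner than the paper's phrasing, which momentarily introduces an enumeration-dependent quantity before arriving at the same conclusion.
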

\begin{proof}
    Let $\SC$ be an arbitrary collection and let $G$ non-uniformly generate without repetitions for $\SC$.
    We claim that \Cref{alg:repeat} on input $G$ is a non-uniform generator with repetitions for $\SC$.
    Fix an arbitrary $K \in \SC$ and let $\tst$ be the time at which
    $G$ generates correctly without repetitions for $K$.
    Fix an arbitrary enumeration $x$ of $K$ with repetitions
    and let $\dst$ be the first time at which $\abs{\{x_0, \dots, x_{\dst}\}} = \tst$.
    It is easy to see that at any time $t$ during the execution of the algorithm,
    we have $S = \{x_0, \dots, x_{t}\} = \{y_0, \dots, y_i\}$.
    Thus at any time $t \ge \dst$, the sequence $y_0$, \dots, $y_i$ must consist of
    at least $\tst$ distinct strings.
    Since $G$ must generate correctly at time $\tst$, the output $z_t$ is in $K \setminus S_t$ as desired.
\end{proof}

We conclude with the corresponding proof for generation in the limit.

\begin{lemma}
    If a collection $\SC$ is generatable in the limit without repetition,
    then $\SC$ is generatable in the limit generatable with repetitions.
\end{lemma}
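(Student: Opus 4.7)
The plan is to reuse \Cref{alg:repeat} exactly as in the two previous lemmas in this appendix, but with the time bound now allowed to depend on the enumeration (since we are in the generation-in-the-limit regime rather than the uniform or non-uniform regime). Fix an arbitrary $K \in \SC$ and an arbitrary sequence $x_0, x_1, \dots$ with repetitions satisfying $S_\infty = K$. Run \Cref{alg:repeat} on the without-repetition generator $G$. The algorithm maintains the de-duplicated subsequence $y_0, y_1, \dots$ of $x$, consisting of the first occurrence of each distinct string in $x$. Because $S_\infty = K$, every element of $K$ appears in $x$ and therefore appears exactly once in $y$, so $y$ is a valid enumeration of $K$ without repetition.

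Next, I would invoke the assumption that $G$ generates in the limit without repetition: applied to the enumeration $y$ of $K$, there exists a time $\tst = \tst(K, y)$ such that for all $i \ge \tst$, the output $G(y_0, \dots, y_i)$ belongs to $K \setminus \{y_0, \dots, y_i\}$. Let $\dst$ be the first time $t$ in the with-repetition enumeration at which $\abs{\{x_0, \dots, x_t\}} \ge \tst + 1$; such a $\dst$ exists because $S_\infty = K$ is infinite. The key invariant of \Cref{alg:repeat}, already used in the previous two lemmas, is that at every step we have $S_t = \{x_0, \dots, x_t\} = \{y_0, \dots, y_i\}$, where $i$ is the current value of the outer loop variable. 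Hence for every $t \ge \dst$, the corresponding $i$ satisfies $i \ge \tst$, so $z_t = G(y_0, \dots, y_i) \in K \setminus \{y_0, \dots, y_i\} = K \setminus S_t$, which is exactly what generation in the limit with repetitions requires.

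There is essentially no obstacle beyond carefully translating between the two time indices; the only mildly subtle point is verifying that $y$ is a full enumeration of $K$ (not just a subset), which uses the assumption $S_\infty = K$ rather than merely $S_\infty \subseteq K$, and that $\dst$ depends on $K$ and $x$ (which is permitted in the in-the-limit setting). With those two points noted, the proof reduces to the same invariant argument already recorded for the uniform and non-uniform cases, so I would keep the writeup short and simply remark that the proof is identical to the previous two lemmas, except that now the guaranteed time $\tst$ is allowed to depend on the induced enumeration $y$.
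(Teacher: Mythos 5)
Your proposal is correct and follows the paper's approach exactly: run \Cref{alg:repeat} on the without-repetition generator, observe that the de-duplicated sequence $y$ is a valid repetition-free enumeration of $K$ (using $S_\infty = K$), apply the in-the-limit guarantee to get a time $\tst$ depending on $K$ and $y$, and translate via the invariant $S_t = \{y_0,\dots,y_i\}$. In fact your writeup is slightly more careful than the paper's, which copy-pastes the phrase ``non-uniformly generate'' from the preceding lemma and references a threshold $\dst$ without re-deriving it; you supply both fixes.
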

\begin{proof}
    Let $\SC$ be an arbitrary collection and let $G$ non-uniformly generate without repetitions for $\SC$.
    We claim that \Cref{alg:repeat} on input $G$ is a non-uniform generator with repetitions for $\SC$.
    Fix an arbitrary $K \in \SC$ and an arbitrary enumeration $x$ of $K$ with repetitions.
    Now consider the sequence $y_0$, $y_1$, \dots\ generated by \Cref{alg:repeat}
    when the adversary reveals strings according to the enumeration $x$.
    It is clear that $y$ is an enumeration of $K$ without repetitions.
    Thus let $\tst$ be the time at which $G$ generates correctly given the enumeration $y$.
    It is easy to see that at any time $t$ during the execution of the algorithm,
    we have $S = \{x_0, \dots, x_{t}\} = \{y_0, \dots, y_i\}$.
    Thus at any time $t \ge \dst$, the sequence $y_0$, \dots, $y_i$ must consist of
    at least $\tst$ distinct strings.
    Since $G$ must generate correctly at time $\tst$, the output $z_t$ is in $K \setminus S_t$ as desired.
\end{proof}


\end{document}